\newif\ifstylew  \stylewtrue
\newcommand{\G}{\mathbb{G}}
\newcommand{\R}{\mathbb{R}}
\newcommand{\V}{\mathcal{V}}
\newcommand{\F}{\mathcal{F}}
\newif\ifnotesw \noteswtrue
\numberwithin{theorem}{section}
\numberwithin{equation}{section}
\numberwithin{figure}{section}
\numberwithin{table}{section}
\newcommand{\TheTitle}{Modeling the transmission of \textit{Wolbachia} in mosquitoes 
for controlling mosquito-borne diseases} 
\newcommand{\TheAuthors}{Zhuolin Qu, Ling Xue, and James M. Hyman}
\title{{\TheTitle}\thanks{Submitted to the editors DATE.
\funding{This research was partially supported by the NSF-MPS/NIH-NIGMS award NSF-1563531 and the NIH-NIGMS Models of Infectious Disease Agent Study (MIDAS) award U01GM097661.}}}
\author{
  Zhuolin Qu\thanks{Department of Mathematics, Tulane University, New Orleans, LA USA 70118
    (\email{zqu1@tulane.edu},
    \email{mhyman@tulane.edu}).}
  \and
  Ling Xue\thanks{Department of Mathematics, University of Manitoba, Winnipeg, MB, Canada R3T 2N2(\email{Ling.Xue@umanitoba.ca}).}
  \and
  James M. Hyman\footnotemark[2]
}
\renewcommand\tableofcontents{%
  \null\hfill\textbf{\Large\contentsname}\hfill \null
  \@mkboth{\MakeUppercase\contentsname}{\MakeUppercase\contentsname}%
  \@starttoc{toc}%
}
\begin{document}

\maketitle 

\begin{abstract}
We develop and analyze an ordinary differential equation model to assess the potential effectiveness of infecting mosquitoes with the \textit{Wolbachia} bacteria to control the ongoing mosquito-borne epidemics, such as dengue fever, chikungunya, and Zika. \textit{Wolbachia} is a natural parasitic microbe that stops the proliferation of the harmful viruses inside the mosquito and reduces disease transmission.  It is difficult to sustain an infection of the maternal transmitted \textit{Wolbachia} in a wild mosquito population because of the reduced fitness of the \textit{Wolbachia}-infected mosquitoes and cytoplasmic incompatibility limiting maternal transmission.  The infection will only persist if the fraction of the infected mosquitoes exceeds a minimum threshold.  Our two-sex mosquito model captures the complex transmission-cycle by accounting for heterosexual transmission, multiple pregnant states for female mosquitoes, and the aquatic-life stage.  We identify important dimensionless numbers and analyze the critical threshold condition for obtaining a sustained \textit{Wolbachia} infection in the natural population.  This threshold effect is characterized by a backward bifurcation with three coexisting equilibria of the system of differential equations: a stable disease-free equilibrium, an unstable intermediate-infection endemic equilibrium and a stable high-infection endemic equilibrium.  We perform sensitivity analysis on epidemiological and environmental parameters to determine their relative importance to \textit{Wolbachia} transmission and prevalence.   We also compare the effectiveness of different integrated mitigation strategies and observe that the most efficient approach to establish the \textit{Wolbachia} infection is to first reduce the natural mosquitoes and then release both infected males and pregnant females.  The initial reduction of natural population could be accomplished by either residual spraying or ovitraps.
\end{abstract}

\begin{keywords}
mosquito-borne diseases, maternal transmission, backward bifurcation, integrated mosquito management
\end{keywords}

\begin{AMS}
  92D30, 34K18, 93A30
\end{AMS}

\section{Introduction}
Mathematical models can be tools to help guide mitigation efforts for zoonotic mosquito-borne diseases, such as dengue fever, chikungunya, and Zika. The symptoms for dengue fever and chikungunya include high fever, muscle and joint pains \cite{gubler1998dengue,pialoux2007chikungunya}.  Although the symptoms for Zika virus in adult human infection are usually mild, non-life threatening, a Zika infection during the pregnancy can lead to microcephaly in newborns \cite{CDCZika}. There are no effective vaccines available for these mosquito-borne diseases \cite{CDCDengueVaccine,CDCChkVaccine,CDCZikaVaccine}, and the mitigation efforts focus on the primary transmission vector, the \textit{Aedes aegypti} (\textit{Ae. aegypti}) mosquito. Most mitigation strategies focus on reducing the population size, including removing the breeding sites of mosquitoes \cite{AMCA} and indoor spraying of insecticide such as DDT.  These approaches have been proved to be effective against \textit{Ae. aegypti} mosquitoes, but the associated high financial cost, logistical difficulty in rural or urban areas and the evolution of resistance prevent it from being a reliable long-term treatment of the mosquito population \cite{lumjuan2005elevated,mcgraw2013beyond}.

Some strains of \textit{Wolbachia pipientis} (referred to as \textit{Wolbachia}) can block pathogen transmission of viruses in \textit{Ae. aegypti} \cite{walker2011wmel} and a potential strategy to reduce their ability to transmitted zoonotic diseases is to infect the wild mosquitoes with \textit{Wolbachia}. \textit{Wolbachia} is an endosymbiotic bacterium that is maternally transmitted and has been naturally found in more than $60\%$ of all insect species \cite{hilgenboecker2008many}, but not in wild \textit{Ae. aegypti} mosquitoes. The infection-induced phenomenon, cytoplasmic incompatibility (CI) \cite{laven1956cytoplasmic} that leads to early deaths of embryos produced by the crossing between an infected male mosquito and a natural female mosquito, has been employed as a bio-pesticide to eliminate natural mosquito population \cite{o2012open}. However, this strategy requires repetitive releases of a large number of infected male mosquitoes in a long run to reduce the overall population size \cite{eliminatedengue}.

If a stable population of \textit{Wolbachia}-infected mosquitoes can be established, then this approach has the potential of being both cost effective and sustainable in reducing the spread of zoonotic diseases. Since \textit{Wolbachia}-infected females are not affected by the CI phenomenon, the goal is to use the resulting reproductive advantage of infected females over uninfected ones to invade wild \textit{Ae. aegypti} population. 
Some \textit{Wolbachia} strains, such as wMelPop significantly reduces the mosquito's lifespan, 
and many of the zoonotic disease-infected mosquitoes die before they can transmit the disease to humans  \cite{mcmeniman2009stable,mcmeniman2010virulent}. Unfortunately, the reduced lifespan of wMelPop-infected mosquitoes extracts a high fitness cost and prevents the infection from being self-sustaining \cite{walker2011wmel}. The wMel \textit{Wolbachia} strain that has a lower fitness cost and high maternal transmission has been transinfected to \textit{Ae. aegypti} and successfully introduced into two areas in Australia \cite{hoffmann2014stability}. 

The success of disease control using \textit{Wolbachia} requires establishing a high-level of infection within a wild mosquito population, the key obstacle of which is to overcome the loss of fitness in the infected females, including reduced lifespan (higher death rate) and decreased fecundity (lower egg laying rate). This reduced fitness of the infected mosquitoes causes a small infection level to be cleared out, that is the disease-free state is a locally stable equilibrium. However, there is a threshold condition where if a sufficient number of mosquitoes are infected, the infection can persist. Both differential equation and discrete-time mathematical models can help understand the complex interaction of  factors that define these persistence conditions \cite{farkas2010structured,field1999microbe, hughes2013modelling, keeling2003invasion, koiller2014aedes, ndii2012modelling,  ndii2015modelling, turelli1991rapid, xue2016two}. 

Most existing ordinary differential equation (ODE) compartmental models for \textit{Wolbachia} transmission assume that there is a fixed ratio of males to females.  This assumption is a good approximation for most wild mosquito populations and can be used to reduce the model to a single-sex model with a fixed male/female ratio.  Unfortunately, this assumption is violated by some of the mitigation strategies, such as releasing only infected male mosquitoes into a wild population.  After reviewing some of the existing models, we will describe our two-sex model that also accounts for the \textit{Wolbachia} infection and the pregnancy status of the mosquitoes. 

In \cite{keeling2003invasion}, fixed sex ratio ODE models were proposed to study the competition and coexistence between multiple strains of  \textit{Wolbachia} in a well-mixed population. This paper also discussed models with spatial terms that described discretized habitats and continuous/stochastic individuals. In \cite{farkas2010structured}, a fixed male/female ratio age-structured model was proposed to incorporate different fertility and mortality rates at different stages of the life cycle of individuals, and the fitness cost was treated as increased mortality or reduced birth rate. In \cite{ndii2012modelling}, an ODE system that consisted of four compartments was used to investigate the competition between \textit{Wolbachia}-infected mosquitoes and wild mosquitoes. The authors assumed fixed ratio between male and female mosquitoes again to simplify the system and explicitly included the aquatic stage of mosquitoes and the associated resource-competition effect. Four types of steady-states were observed, depending on the maternal transmission rate, and their stability was numerically studied. 

In \cite{hughes2013modelling}, a model for dengue transmission that consisted both hosts (human being) and vectors (mosquitoes) was developed. The mosquito population were divided into uninfected and infected population, where the birth rates were parameterized from field data using a decreasing function. Like \cite{farkas2010structured}, the CI effect was reflected as reduced fertility of uninfected eggs fertilized by infected males. In \cite{ndii2015modelling}, seasonality effects in the mosquito population were introduced through the adult mosquito death rate to describe the dynamics in regions with a strong seasonal climate (distinct wet and dry periods), and the model predicted that mosquitoes carrying the wMelPop strain are less likely to persist compared with the wMel strain due to the significant reduction in lifespan.  

With few exceptions (e.g. \cite{koiller2014aedes,xue2016two}), most of these models did not stress the differences among different life stages of mosquitoes and the variant \textit{Wolbachia}-induced fitness costs for the female and male mosquitoes. Recently, in \cite{xue2016two}, a compartmental two-sex model was proposed, where the life cycle of a mosquito was divided into compartments for adult male and female mosquitoes, and an aquatic stage that combines egg, larvae and pupae.  When the basic reproductive number is less than one, the threshold effect is characterized by a backward bifurcation with three coexisting equilibria: a stable zero-infection equilibrium, an intermediate-infection unstable endemic equilibrium, and a high-infection stable endemic equilibrium (or complete infection for perfect maternal transmission). 

A female mosquito usually mates successfully once, and oviposits its eggs in different places during its entire life \cite{koiller2014aedes, foster2002mosquitoes}. Thus, when considering a two-sex model, it is important to distinguish the nonpregnant (unmated) females from the ``pregnant'' (mated) females. In \cite{koiller2014aedes}, a two-sex compartmental model of 13 ODEs explicitly included each stage of the immature mosquito (egg, larvae, pupae), and young (unmated) and fertilized (mated) females were considered separately. The fitness cost from infection was taken into account by using reduced egg laying rate for the infected females and reduced mean lifespans for both infected females and males. Under the assumption of perfect maternal transmission, three types of equilibrium were found: a stable \textit{Wolbachia} free equilibrium, a stable completely \textit{Wolbachia}-infected equilibrium and an unstable equilibrium representing the coexistence between infected and uninfected mosquitoes. 

To better understand the dynamics for the \textit{Wolbachia} invasion in a wild mosquito population, we propose a system of 9 ODEs that includes aquatic-stage mosquitoes and multiple pregnant stages for females, and we analyze the threshold condition required to sustain endemic \textit{Wolbachia} for both perfect and imperfect maternal transmissions. Our main findings are:
\begin{itemize}
\item There are three types of equilibrium: a disease-free equilibrium with no infected mosquitoes; a complete-infection equilibrium where all mosquitoes are infected; and an endemic equilibrium with both infected and uninfected mosquitoes coexisting.  
\item The epidemic can be characterized by three dimensionless numbers: the next generation number for the uninfected population, $\G_{0u}$, measures the number of uninfected eggs produced by one uninfected egg through one life cycle; the next generation number for the infected population, $\G_{0w}$, measures the number of infected eggs produced by one infected egg through one life cycle; and the basic reproductive number $\R_0 = \G_{0w}/\G_{0u}$ measures the average number of secondary infections a single \textit{Wolbachia}-infected mosquito will cause when introduced into a fully susceptible population. 
\item The backward bifurcation analysis of the proposed model indicates that when the basic reproductive number $\R_0<1$, there can still exist a stable endemic equilibrium and there is a threshold condition for the fraction of the mosquitoes that must be exceeded for a sustained \textit{Wolbachia} infection in a wild mosquito population.
\item The threshold condition can be analyzed in terms of the basic reproductive number, which is a combination of maternal transmission rate, the ratio of lifespans of infected and uninfected females, the ratio of egg laying rates for infected and uninfected females and the mating rate between a male mosquito and a nonpregnant female mosquito.
\item The best mosquito management to establish a sustained \textit{Wolbachia} infection includes using pre-release mitigation to reduce the population of wild uninfected mosquitoes before releasing a large number of \textit{Wolbachia}-infected males and pregnant females.
\end{itemize}

After describing the proposed multi-stage \textit{Wolbachia} model, we derive three types of equilibrium and their conditions of existence (\cref{sec:sec_3}), analyze the stability of the equilibria (\cref{sec:sec_4}), and characterize the threshold condition as backward bifurcation for the stable fixed points (\cref{sec:sec_5}). We then simulate and compare practical mitigation strategies in the field context (\cref{sec:sec_7}), and sensitivity analysis is performed to illustrate the key factors to the threshold condition (\cref{sec:sec_6}). 

\section{Mathematical Model}
Our multi-stage compartmental ODE model (\cref{fig:cycle}) accounts for the heterosexual transmission of \textit{Wolbachia} and the maternal transmission from infected females to their offspring. The life cycle of a mosquito is  divided into two main stages:  the aquatic stage that includes the egg, larva and pupa life stages, and the adult mosquito stage. The uninfected and the infected classes of the aquatic-stage mosquitoes are denoted by $A_u$ and $A_w$, respectively. 
The complexity induced by CI effect within maternal transmission is captured by grouping the adult mosquito population into seven compartments. The male mosquitoes can be uninfected, $M_u$,  or infected ones, $M_w$, while the nonpregnant female mosquitoes (unmated) can be uninfected, $F_u$, or infected with \textit{Wolbachia}, $F_w$.   The pregnant (mated) females can be in one of three states:  uninfected and fertile,  $F_{pu}$;   infected and sterile (the eggs laid by which don't hatch due to CI), $F_{ps}$; or infected and fertile, $F_{pw}$,  where a high percentage of their eggs are infected. 

Unlike the male mosquitoes, which could mate several times before their supplies of mature sperms and accessory gland secretion become depleted, the female mosquitoes typically mate only once and stores the sperm for several clutches of eggs.  A female rarely mates with  more than one male \cite{foster2002mosquitoes}. Our model includes separate stages for nonpregnant and pregnant female mosquitoes, and assumes there are no contacts between  male  and  pregnant female mosquitoes. 

\begin{figure}[t]
\centering
\includegraphics[width=0.9\textwidth]{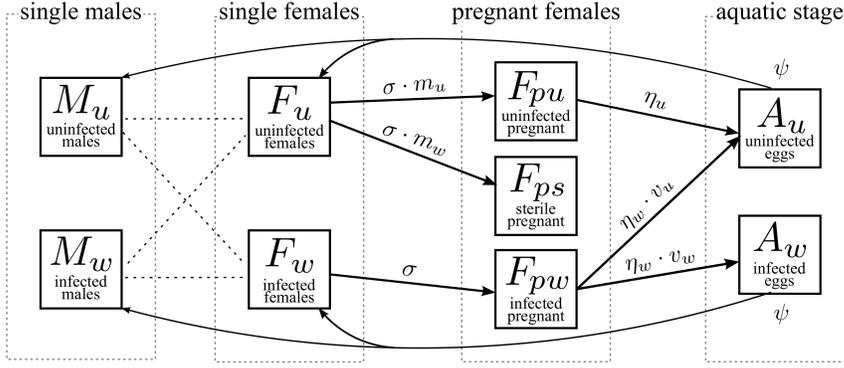}\label{fig:cycle}
\caption{Maternal transmission of \textit{Wolbachia} in the mosquito population. Uninfected females, $F_u$, and infected females, $F_w$, have contacts with either uninfected males, $M_u$, or infected males, $M_w$, once in their lives and enter one of the pregnant stages (with mating rate $\sigma$): uninfected pregnant $F_{pu}$ ($F_u$ cross $M_u$), pregnant but sterile $F_{ps}$ (CI effect: $F_u$ cross $M_w$) and infected pregnant $F_{pw}$ ($F_w$ cross either $M_u$ or $M_w$), depending on the infection status of the partners. Pregnant females start the gonotrophic cycle and produce aquatic-stage mosquitoes: uninfected pregnant females, $F_{pu}$, only produce uninfected individuals, $A_u$, (at rate $\phi_u$), pregnant sterile females, $F_{ps}$, do not have any offspring, and infected pregnant females $F_{pw}$ produce a fraction of $v_w$ infected offspring $A_w$ and a fraction of $ v_u$ uninfected offspring (at rate $\phi_w$). The aquatic-stage mosquitoes hatch and emerge into adult forms (at rate $\psi$), fraction $b_m$ of which are males and fraction $b_f$ are females.
]}
\end{figure}

We denote the per capita mortality rates of the aquatic-stage mosquitoes, the uninfected females, the infected females, the uninfected males and infected males by $\mu_a$, $\mu_{fu}$, $\mu_{fw}$, $\mu_{mu}$, and $\mu_{mw}$, respectively. We have assumed the environmental parameters remain stable, that is the changes in temperature and humidity are relatively small, so that the mortality rates are constants. We also use the same mortality rate for the infected and uninfected aquatic-stage mosquitoes, since the corresponding survival rates are not significantly different from each other \cite{walker2011wmel,mcmeniman2010virulent}. 

When there are abundant breeding sites, the egg laying rates of the uninfected females, $F_{pu}$ is $\phi_u$, and is $\phi_w$ for the infected females, $F_{pw}$. This rate is reduced by a carrying capacity, $K_a$, of the aquatic local environment, which is dependent on the availability of the breeding sites and essential environmental resources. 
Our model combines these two effects and defines the per capita egg-laying oviposition rate for uninfected and \textit{Wolbachia}-infected pregnant females as
\begin{align} 
\eta_u(A_u,A_w) =\phi_u\Big(1-\frac{A_u+A_w}{K_a}\Big)  ~,~ \text{and}~~~~
\eta_w(A_u,A_w)  = \phi_w\Big(1-\frac{A_u+A_w}{K_a}\Big) ~~.
\label{E:eta}
\end{align}

The maternal transmission efficiency, $v_w$ ($0\le v_w\le 1$), is the fraction of the offspring of  \textit{Wolbachia}-infected females that are infected and is a key parameter for establishing a sustainable population of  \textit{Wolbachia}-infected mosquitoes.  That is,  an infected pregnant female, $F_{pw}$ , lays infected eggs at the rate $v_w\eta_w$ and uninfected eggs at the rate 
$ v_u\eta_w$, where  $ v_u=1- v_w$. 
There is almost perfect maternal transmission, $ v_w \approx 1$, for the \textit{Wolbachia} strains  we are considering \cite{mcmeniman2009stable,hoffmann2014stability}.  
The aquatic-stage mosquitoes develop to adult forms at a per capita rate $\psi$, a fraction $b_f$ of which are females and $b_m=1-b_f$ are males. Typically, $b_f \approx b_m\approx 0.5$.  We assume development rate is the same in the uninfected and infected aquatic-stage population \cite{walker2011wmel,mcmeniman2010virulent}. 

The rate that nonpregnant females, $F_u$, progress to the pregnant uninfected females, $F_{pu}$, depends on the rate that nonpregnant females mate with uninfected males. We assume a constant mating rate $\sigma$ for different crosses between infected/uninfected females and infected/uninfected males. Unlike some other control strategies such as sterile insect technique \cite{alphey2010sterile} that may affect the competitiveness of the male mosquitoes, \textit{Wolbachia}-infected males are equally successful in finding and mating with females \cite{segoli2014effect}.  
When a nonpregnant female mates with a randomly selected male, the probability that the male will be uninfected is $m_u=M_u/(M_u+M_w)$.  Therefore, the $F_u$ population advances to $F_{pu}$ population at the rate $\sigma m_u$.  The rates the females advance to the other pregnant states depends on the probability that a sexual contact will be with an infected male, $m_w=1-m_u=M_w/(M_u+M_w)$, and 
can be obtained in a similar approach. 

According to the assumptions above, a model that describes the population dynamics of \textit{Wolbachia} transmission within mosquitoes is given by the following ODE system \cref{eq:ODEa,eq:ODEb,eq:ODEc,eq:ODEd,eq:ODEe,eq:ODEf,eq:ODEg,eq:ODEh,eq:ODEi}: 
\begin{subequations}
\begin{align}
\frac{d A_{u}}{d  t} &=\eta_u F_{pu}+ \eta_wv_uF_{pw}-(\mu_a+\psi)A_u~~,\label{eq:ODEa}\\
\frac{d  A_{w}}{d  t} &= \eta_wv_wF_{pw}-(\mu_a+\psi)A_w~~,\label{eq:ODEb}\\
\frac{d  F_{u}}{d  t} &= b_f \psi A_u -(\sigma+\mu_{fu}) F_u~~,\label{eq:ODEc}\\
\frac{d  F_{w}}{d  t} &= b_f \psi A_w -(\sigma+\mu_{fw}) F_w~~,\label{eq:ODEd}\\
\frac{d  F_{pu}}{d  t} &=\sigma m_uF_u-\mu_{fu} F_{pu}~~,\label{eq:ODEe}\\
\frac{d  F_{pw}}{d  t} &=\sigma  F_w-\mu_{fw} F_{pw}~~,\label{eq:ODEf}\\
\frac{d  M_{u}}{d  t} &= b_m\psi A_u -\mu_{mu} M_u~~,\label{eq:ODEg}\\
\frac{d  M_{w}}{d  t} &= b_m\psi A_w-\mu_{mw} M_w~~,\label{eq:ODEh}\\
\frac{d  F_{ps}}{d  t} &=\sigma m_w F_u-\mu_{fw} F_{ps}~~.\label{eq:ODEi}
\end{align}
\end{subequations}
The last equation \cref{eq:ODEi} for the pregnant sterile females is decoupled from the other equations and need not be considered in the stability analysis for the equilibrium states. A table of the parameter values are listed in \cref{tab:parameter}.
\begin{table}[t]
\caption{The parameters used for the \textit{Wolbachia} model. Parameter values and ranges listed below are for \textit{Ae. aegypti} mosquitoes with or without wMel strain \textit{Wolbachia} infection. The baseline values represent our best-guess estimates of the parameters in a realistic environment and are used in all the simulations, unless stated otherwise.  The Greek letter parameters are all rates with dimension $days^{-1}$. The basic reproductive number for the baseline parameters is $\R_0=0.722$ }
\label{tab:parameter}
\centering
\begin{tabular}{lllll}
\toprule
 & Description & Baseline & Range & References\\
\midrule
$b_f$ & Female birth probability & 0.5 & 0.50 -- 0.57 & \cite{tun2000effects}\\
$b_m$ & Male birth probability $=1-b_f$ & 0.5 &0.43 -- 0.50 &\cite{tun2000effects} \\
$\sigma$ &Per capita mating rate & 1 & - & Assumption \\
$\phi_u$ &Per capita egg $F_{pu}$ laying rate  & 13 & 12 -- 18 & \cite{hoffmann2014stability,mcmeniman2009stable,mcmeniman2010virulent}\\
$\phi_w$ &Per capita egg $F_{pw}$ laying rate  & 11 & 8 -- 12 & \cite{hoffmann2014stability,walker2011wmel}\\
$v_w$ &Maternal transmission efficiency & 0.95 & 0.89 -- 1 &\cite{walker2011wmel}\\
$v_u$ &$=1-v_w$ & 0.05 & 0.0 -- 0.11 & \cite{walker2011wmel}\\
$\psi$ &Per capita development rate & 1/8.75 & 1/9.2 -- 1/8.1 & \cite{hoffmann2014stability,walker2011wmel}\\
$\mu_a$& Death rate for $A_u$ or $A_w$ & 0.02 & 0.01 -- 0.04 &\cite{hoffmann2014stability,mcmeniman2010virulent,walker2011wmel}\\
$\mu_{fu}$ &Death rate for $F_u$ & 1/17.5 & 1/21 -- 1/14 &\cite{mcmeniman2009stable,styer2007mortality}\\
$\mu_{fw}$ &Death rate for $F_w$ & 1/15.8 & 1/19 -- 1/12.6 & \cite{walker2011wmel} \\
$\mu_{mu}$ &Death rate for $M_u$ & 1/10.5 & 1/14 -- 1/7 & \cite{mcmeniman2009stable,styer2007mortality} \\
$\mu_{mw}$ &Death rate for $M_w$ & 1/10.5 & 1/14 -- 1/7 & \cite{mcmeniman2009stable,styer2007mortality} \\
$K_a$ &Carrying capacity of $A_u$ or $A_w$ & $2\times 10^5$ & - &  Assume\\
\bottomrule
\end{tabular}
\end{table}

The system \cref{eq:ODEa,eq:ODEb,eq:ODEc,eq:ODEd,eq:ODEe,eq:ODEf,eq:ODEg,eq:ODEh,eq:ODEi} is epidemiologically and mathematically well-posed in the epidemiologically valid domain 
\begin{equation*}
\mathcal{D}=\left\{
\left(
\begin{array}{c}
A_u\\
A_w\\
F_u\\
F_w\\
F_{pu}\\
F_{pw}\\
F_{ps}\\
M_u\\
M_w
\end{array}
\right)\in \mathcal{R}^9~
\middle|
\begin{array}{c}
A_u\ge 0,\\
A_w\ge 0,\\
0\le A_u+A_w\le K_a,\\
F_u\ge 0,\\
F_w\ge 0,\\
0\le F_u+F_w\le \frac{b_f\psi K_a}{\sigma+\mu_{fu}},\\
F_{pu}\ge 0,\\
F_{pw}\ge 0,\\
F_{ps}\ge 0,\\
0\le F_{pu}+F_{pw}+F_{ps}\le \frac{\sigma}{\sigma+\mu_{fu}}\frac{b_f\psi K_a}{\mu_{fu}},\\
M_u\ge 0,\\
M_w\ge 0,\\
0\le M_u+M_w\le \frac{b_m\psi K_a}{\mu_{mu}}
\end{array}
\right\}.
\end{equation*}

\begin{theorem}[Forward Invariance]\label{thm:th_inv}
Assuming that the initial condition lies in domain $\mathcal{D}$, the system of equations for the maternal transmission \textit{Wolbachia} model \cref{eq:ODEa,eq:ODEb,eq:ODEc,eq:ODEd,eq:ODEe,eq:ODEf,eq:ODEg,eq:ODEh,eq:ODEi} has a unique solution that remains in $\mathcal{D}$ for all time $t>0$.
\end{theorem}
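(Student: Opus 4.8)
The plan is to establish the result in two stages: first local existence and uniqueness of solutions from the regularity of the vector field, and then forward invariance of $\mathcal{D}$, which simultaneously supplies the a priori bounds needed to continue the local solution to all $t>0$. I would denote by $f$ the right-hand side of \cref{eq:ODEa,eq:ODEb,eq:ODEc,eq:ODEd,eq:ODEe,eq:ODEf,eq:ODEg,eq:ODEh,eq:ODEi}. Every component of $f$ is a polynomial in the state variables except for the terms $\sigma m_u F_u$ and $\sigma m_w F_u$ in \cref{eq:ODEe,eq:ODEi}, which involve the ratios $m_u=M_u/(M_u+M_w)$ and $m_w=M_w/(M_u+M_w)$. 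On the relatively open subset of $\mathcal{D}$ where $M_u+M_w>0$ these ratios are smooth, so $f\in C^1$ and is locally Lipschitz; the Picard--Lindel\"of theorem then yields a unique maximal solution through any initial datum with $M_u+M_w>0$.

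For forward invariance I would use the fact that $\mathcal{D}$ is a convex polytope, the intersection of the half-spaces defined by the listed linear inequalities, and apply the Nagumo subtangential criterion: $\mathcal{D}$ is forward invariant provided that at each boundary point the field does not point strictly outward across an active facet, i.e.\ the time derivative of each active constraint has the appropriate sign. I would check the facets in two groups. On each coordinate hyperplane $x_i=0$ the corresponding equation reduces to its nonnegative inflow terms --- for instance $\dot A_u=\eta_u F_{pu}+\eta_w v_u F_{pw}\ge 0$ on $A_u=0$, and $\dot F_{pu}=\sigma m_u F_u\ge 0$ on $F_{pu}=0$, and similarly for the remaining coordinates --- so no trajectory can leave through a coordinate face, and nonnegativity is preserved.

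On each capacity facet I would compute the derivative of the relevant sum. On $A_u+A_w=K_a$ the oviposition rates $\eta_u,\eta_w$ vanish, so $\tfrac{d}{dt}(A_u+A_w)=-(\mu_a+\psi)K_a<0$. On $F_u+F_w=b_f\psi K_a/(\sigma+\mu_{fu})$, using $A_u+A_w\le K_a$ together with the fitness-cost ordering $\mu_{fw}\ge\mu_{fu}$ gives $\tfrac{d}{dt}(F_u+F_w)\le b_f\psi K_a-(\sigma+\mu_{fu})(F_u+F_w)=0$; an analogous computation on the pregnant-female facet (using the cancellation $m_u+m_w=1$, the bound on $F_u+F_w$, and again $\mu_{fw}\ge\mu_{fu}$) and on the male facet (using $\mu_{mw}\ge\mu_{mu}$, here an equality) shows each of those sums is non-increasing on its bounding facet. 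Hence $f$ is subtangential on $\partial\mathcal{D}$ and $\mathcal{D}$ is forward invariant. Since any maximal solution issued from $\mathcal{D}$ is then confined to the compact set $\mathcal{D}$ it stays bounded, so its maximal interval of existence cannot be finite and extends to all $t>0$, giving the claimed unique global solution.

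The main obstacle I anticipate is the degenerate set $\{M_u=M_w=0\}$, where $m_u,m_w$ are undefined and $f$ fails to be Lipschitz. I would dispatch it by observing that $\dot M_u=b_m\psi A_u\ge 0$ and $\dot M_w=b_m\psi A_w\ge 0$, so the male population can vanish only when the aquatic population is simultaneously zero; on this residual set the mating terms carry no flux, and one may extend $f$ continuously there by setting $m_u=m_w=0$ (reflecting the absence of available mates), after which the invariance and continuation arguments above apply verbatim.
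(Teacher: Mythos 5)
Your proposal follows essentially the same route as the paper's own proof: uniqueness from smoothness of the right-hand side, and forward invariance by checking, facet by facet, that the flow is subtangential to $\mathcal{D}$ --- the same computations on the coordinate hyperplanes, and the same capacity-facet estimates using the vanishing of $\eta_u,\eta_w$ at $A_u+A_w=K_a$ and the orderings $\mu_{fw}\ge\mu_{fu}$, $\mu_{mw}\ge\mu_{mu}$. Naming Nagumo's criterion and spelling out the compactness/continuation step are refinements of, not departures from, the paper's argument, which leaves both implicit.

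The one place you go beyond the paper is the degenerate set $\{M_u=M_w=0\}$, and you are right to flag it: the paper's assertion that the right-hand side is $C^1$ on all of $\mathcal{D}$ is false exactly there. However, your patch does not work as stated. The ratio $m_u=M_u/(M_u+M_w)$ has no limit at the origin of the $(M_u,M_w)$-plane (its value depends on the direction of approach), so setting $m_u=m_w=0$ does \emph{not} extend $f$ continuously wherever $F_u>0$, and Picard--Lindel\"of cannot then be applied ``verbatim.'' Likewise, the inference ``$\dot M_u\ge 0$ on $M_u=0$, hence the males can vanish only when the aquatic population vanishes'' is not valid: boundary sign conditions alone do not preclude $M_u+M_w$ reaching zero in finite time. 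The clean repair is a Gronwall bound, $\tfrac{d}{dt}(M_u+M_w)\ge -\max(\mu_{mu},\mu_{mw})\,(M_u+M_w)$, which shows that any trajectory starting with $M_u+M_w>0$ never reaches the degenerate set in finite time, so the local Lipschitz theory on $\{M_u+M_w>0\}$ suffices for such data; initial data with $M_u+M_w=0$ must then be treated separately (adopt the convention that the mating terms vanish there and verify existence/uniqueness directly). This residual case is a gap your write-up shares with the paper, but your version at least isolates where the difficulty lives.
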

\begin{proof}
The initial value problem \cref{eq:ODEa,eq:ODEb,eq:ODEc,eq:ODEd,eq:ODEe,eq:ODEf,eq:ODEg,eq:ODEh,eq:ODEi} has a unique solution since the right-hand side (RHS) of is continuous with continuous partial derivatives in domain $\mathcal{D}$.  
To prove the domain $\mathcal{D}$ is forward-invariant, we note that along the edges of $\mathcal{D}$ the time derivatives all lead the solution into the invariant domain: 
\begin{align*}
A_u=0 &\implies A_u'\ge 0 ~~\text{\cref{eq:ODEa} since } A_u+A_w \le K_a, \text{and }\eta_u, \eta_w\ge 0 \\
A_w=0 &\implies A_w'\ge 0 ~~\text{\cref{eq:ODEb}},   \\
F_u=0 &\implies F_u'\ge 0 ~~\text{\cref{eq:ODEc}},  \\
F_w=0 &\implies F_w'\ge 0  ~~\text{\cref{eq:ODEd}}, \\
F_{pu}=0 &\implies  F_{pu}' \ge 0~~\text{\cref{eq:ODEe}}, \\
F_{pw}=0 &\implies F_{pw}' \ge 0~~\text{\cref{eq:ODEf}}, \\
M_u=0 &\implies  M_u' \ge 0 ~~\text{\cref{eq:ODEg}}, \\
M_w=0 &\implies   M_w'\ge 0 ~~\text{\cref{eq:ODEh}}, \\
F_{ps}=0 &\implies  F_{ps}'\ge 0~~\text{\cref{eq:ODEi}}. 
\end{align*}
Furthermore,
\begin{align*}
A_u+A_w = &~K_a \implies A_u'+A_w'=-(\mu_a+\psi) K_a<0, \\
F_u+F_w = &~b_f\frac{\psi}{\sigma+\mu_{fu}}K_a\implies \\
& F_u'+F_w'=b_f\psi(A_u+A_w)-(\sigma+\mu_{fu})F_u-(\sigma+\mu_{fw})F_w\\
         &\phantom{F_u'+F_w'}\le b_f\psi K_a-(\sigma+\mu_{fu})(F_u+F_w)=0, \\
F_{pu}+F_{pw}+F_{ps}=&~b_f\frac{\sigma}{\sigma+\mu_{fu}}\frac{\psi}{\mu_{fu}}K_a \implies\\
 & F_{pu}'+F_{pw}'+F_{ps}'=\sigma(F_u+F_w)-\mu_{fu}F_{pu}-\mu_{fw}(F_{pw}+F_{ps})\\
&\phantom{F_{pu}'+F_{pw}'+F_{ps}'}\le \sigma(F_u+F_w)-\mu_{fu}(F_{pu}+F_{pw}+F_{ps})\\
&\phantom{F_{pu}'+F_{pw}'+F_{ps}'}\le b_f \frac{\sigma}{\sigma+\mu_{fu}} \psi K_a-b_f\frac{\sigma}{\sigma+\mu_{fu}}\mu_{fu}\frac{\psi}{\mu_{fu}} K_a=0,\\
M_u+M_w=&~b_m\frac{\psi }{\mu_{mu}}K_a \implies\\
& M_u'+M_w'=b_m\psi(A_u+A_w)-\mu_{mu}M_u-\mu_{mw}M_w \\
& \phantom{M_u'+M_w'} \le b_m\psi K_a-\mu_{mu}(M_u+M_w)=0,
\end{align*}
where we have used the fact that  \textit{Wolbachia} bacteria increases the death rates of infected  mosquitoes, $\mu_{fw}\ge\mu_{fu}$ and $\mu_{mw} \ge \mu_{mu}$.  
Therefore, none of the orbits can leave domain $\mathcal{D}$, and there exist a unique solution.
\end{proof}

\section{Equilibria and Basic Reproductive Number}\label{sec:sec_3}
There are three types of equilibrium points, corresponding to distinct disease spreading situations, that are associated with system \cref{eq:ODEa,eq:ODEb,eq:ODEc,eq:ODEd,eq:ODEe,eq:ODEf,eq:ODEg,eq:ODEh}: disease-free equilibrium (DFE), complete-infection equilibrium (CIE) and endemic equilibrium (EE).  After describing the DFE and CIE equilibrium, we derive the basic reproductive number for the model before analyzing the EE.

\subsection{Disease-free Equilibrium (DFE)}\label{sec:sec_31}
Although \textit{Wolbachia} is found in more than $60\%$ of the insect species \cite{hilgenboecker2008many}, it is not found in wild \textit{Ae. aegypti} because of the loss of fitness it causes in \textit{Ae. aegypti}. In other words, without artificially introducing \textit{Wolbachia} into the field, the wild \textit{Ae. aegypti} mosquito population will be at the DFE.

The DFE is found by setting $A_w=F_w=F_{pw}=M_w=0$, and the unique non-trivial steady-state is denoted by $EE^0=(A_u^0,0,F_u^0,0,F_{pu}^0,0,M_u^0,0)$, where
\begin{align}
A_u^0&=K_a\left(1-\frac{1}{\G_{0u}}\right),\nonumber\\
F_u^0&=b_f\frac{\psi}{\mu_{fu}+\sigma}A_u^0,\label{eq:DFE}\\
F_{pu}^0&=b_f\frac{\psi\sigma}{(\mu_{fu}+\sigma)\mu_{fu}}A_u^0,\nonumber\\
M_u^0&=b_m\frac{\psi}{\mu_{mu}}A_u^0.\nonumber
\end{align}

The next generation number for the uninfected population, 
\begin{equation}
\G_{0u} = b_f\frac{\psi}{\mu_a+\psi}\frac{\sigma}{\sigma+\mu_{fu}}\frac{\phi_u}{\mu_{fu}},
\label{eq:R0u}
\end{equation}
represents the number of uninfected eggs that one uninfected egg can generate within one life cycle of a mosquito. 
This dimensionless number can be interpreted biologically where $1/(\mu_a+\psi)$ is the average time of being in the aquatic stage, $\psi$ is the average per capita developing rate, and $b_f$ is the fraction of an aquatic-stage individual becoming a female adult.  Their product, $b_f\,\psi/(\mu_a+\psi)$, is the probability that an uninfected egg develops into a nonpregnant uninfected female (in compartment $F_{u}$). Similarly, $\sigma/(\sigma+\mu_{fu})$ is the probability that an uninfected nonpregnant female becomes a pregnant uninfected mosquito (at DFE, all males are uninfected), and $\phi_u/\mu_{fu}$ is the average number of eggs that an uninfected pregnant female can produce before it dies.

In a wild mosquito population without \textit{Wolbachia} infection, $\G_{0u}>1$ is the essential condition that guarantees the persistence of the natural population and, therefore,  we assume that $\G_{0u}>1$ .

\subsection{Complete-infection Equilibrium (CIE)}
When the maternal transmission is perfect ($v_w=1$), that is all the offspring produced by the infected pregnant females are infected, it is possible that the \textit{Wolbachia} infection can spread throughout the entire mosquito population. The CIE is found by setting $A_u=F_u=F_{pu}=M_u=0$ in the system \cref{eq:ODEa,eq:ODEb,eq:ODEc,eq:ODEd,eq:ODEe,eq:ODEf,eq:ODEg,eq:ODEh} and can only happen when $v_w=1$. 
This condition can be derived from  \cref{eq:ODEa} where the  term $v_u\eta_wF_{pw}$ has to be zero at CIE. Let  $EE^c=(0,A_w^c,0,F_w^c,0,F_{pw}^c,0,M_w^c)$ denote the CIE, where 
\begin{align}
A_w^c&=K_a\left(1-\frac{1}{\G_{0w}}\right),\nonumber\\
F_w^c&=b_f\frac{\psi}{\mu_{fw}+\sigma}A_w^c,\label{eq:CIE}\\
F_{pw}^c&=b_f\frac{\psi\sigma}{(\mu_{fw}+\sigma)\mu_{fw}}A_w^c,\nonumber\\
M_w^c&=b_m\frac{\psi}{\mu_{mw}}A_w^c.\nonumber
\end{align}
The next generation number for the infected population 
\begin{equation}
\G_{0w} = v_w b_f\frac{\psi}{\mu_a+\psi}\frac{\sigma}{\sigma+\mu_{fw}}\frac{\phi_w}{\mu_{fw}},
\label{eq:R0w}
\end{equation}
represents the number of infected eggs that one infected egg can generate within one life cycle of a mosquito.
Here $v_w=1$ in the case of perfect maternal transmission. As the dimensionless number $\G_{0u}$ introduced in \cref{eq:R0u}, $\G_{0w}$ can also be interpreted biologically as follows: as before, $b_f\,\psi/(\mu_a+\psi)$ is the probability that an infected aquatic-stage egg develops into an infected female adult, $\sigma/(\sigma+\mu_{fw})$ is the probability that an infected nonpregnant female becomes a pregnant infected one (at CIE, only infected males present for mating), and $ v_w\,\phi_w/\mu_{fw}$ is the average number of infected eggs that an infected pregnant female can produce. 

When $v_w<1$, there can still be an infected EE, but it will not be a CIE. We will characterize this EE after first defining the basic reproductive number. 

\subsection{The Basic Reproductive Number \texorpdfstring{$\R_0$}{R0}}

The basic reproductive number $\R_0$ serves as a threshold condition and determines the initial establishment of disease transmission in a totally susceptible population. We derive this dimensionless number directly from the ODE system \cref{eq:ODEa,eq:ODEb,eq:ODEc,eq:ODEd,eq:ODEe,eq:ODEf,eq:ODEg,eq:ODEh} by using the next generation method \cite{van2002reproduction}.  In the next generation analysis, we first collect all the infected compartments of the system, $\mathbf{X}=(A_w,F_w,F_{pw},M_w)^T$, which correspond to \cref{eq:ODEb}, \cref{eq:ODEd}, \cref{eq:ODEf} and \cref{eq:ODEh}, 
and split the right hand side of \cref{eq:ODEb,eq:ODEd,eq:ODEf,eq:ODEh} into two parts: the rates of new infections $\F$ and the rates of transitions $\V$:
\begin{align*}
\frac{d\mathbf{X}}{d t}=
\frac{d}{d t}
\left(
\begin{array}{c}
A_w\\
F_w\\
F_{pw}\\
M_w
\end{array}
\right)
&=
\left(
\begin{array}{c}
 v_w\eta_wF_{pw}\\
0\\
0\\
0
\end{array}
\right)
-
\left(
\begin{array}{c}
(\mu_a+\psi)A_w\\
-b_f\psi A_w+(\sigma+\mu_{fw})F_w\\
-\sigma F_w+\mu_{fw}F_{pw}\\
-b_m\psi A_w+\mu_{mw}M_w
\end{array}
\right)
=:\F-\V.
\end{align*}
The Jacobian matrices of $\F$ and $\V$ at DFE \cref{eq:DFE} are given by
\begin{align*}
J_{\F}&:=\frac{\partial \F}{\partial \mathbf{X}}
=
\left(
\begin{array}{cccc}
0 & 0& v_w\eta_w(A_u^0,0)&0\\
0&0&0&0\\
0&0&0&0\\
0&0&0&0
\end{array}
\right)~~, \text{and}\\
J_{\V}&:=\frac{\partial \V}{\partial \mathbf{X}}
=
\left(
\begin{array}{cccc}
\mu_a+\psi & 0 & 0 & 0\\
-b_f\psi   &\sigma+\mu_{fw} & 0 & 0\\
0          &-\sigma  & \mu_{fw} & 0\\
-b_m\psi  & 0 & 0 & \mu_{mw}
\end{array}
\right)~~.
\end{align*}
The basic reproductive number is calculated as the spectral radius of the next generation matrix $J_{\F}J_{\V}^{-1}$, 
\begin{equation}
\R_0:=\text{Spectral Radius of }(J_{\F}J_{\V}^{-1})=v_w\,\frac{\mu_{fu}\,\phi_w\,(\sigma+\mu_{fu})}{\mu_{fw}\,\phi_u\,(\sigma+\mu_{fw})}~~,
\label{eq:R0}
\end{equation}
and is a linear function of the vertical transmission rate, $v_w$, for \textit{Wolbachia}. 

The role of $v_w$ arises from its role in next generation number for the infected population $\G_{0w}$ and becomes clear when we write  
$\R_0$ \cref{eq:R0}  as
\begin{equation*}
\R_0=\left.\left( v_w b_f\frac{\psi}{\mu_a+\psi}\frac{\sigma}{\sigma+\mu_{fw}}\frac{\phi_w}{\mu_{fw}}\right) \middle/ \left(b_f\frac{\psi}{\mu_a+\psi}\frac{\sigma}{\sigma+\mu_{fu}}\frac{\phi_u}{\mu_{fu}}\right)\right.=\frac{\G_{0w}}{\G_{0u}}.
\end{equation*}
Recall that the biological interpretations of dimensionless numbers $\G_{0w}$ and $\G_{0u}$, and $\R_0$ can be interpreted as the factor for how much the ratio of new infected to new uninfected eggs changes from one generation to the next.

If $\R_0>1$, then a small \textit{Wolbachia} infection would eventually spread throughout the population. Unfortunately, \textit{Wolbachia} infection deceases the fitness of the infected mosquitoes, that is $\G_{0w}<\G_{0u}$ ($\R_0<1$), thus a small \textit{Wolbachia} infection introduced at the DFE will die out. For the baseline case, based on our best estimates for the model parameters, $\R_0=0.72$. 
However, this linear analysis is based on small perturbations about the DFE. When a large infection is introduced, the endemic \textit{Wolbachia} may still happen. We will use backward bifurcation analysis to describe this threshold condition. 

\subsection{Endemic Equilibrium (EE)}\label{sec:sec_34}
Both field releases \cite{hoffmann2014stability} and lab experiments \cite{walker2011wmel} have shown that maternal transmission is not perfect, that is $v_w<1$. Under this situation, CIE could not be achieved. Instead, there are endemic states, where infected and uninfected mosquitoes could coexist in the mosquito population.

The ratio of the infected and uninfected aquatic states as $r_{wu}=A_w/A_u$ is a key parameter in defining the EE.  We assume that $\mu_{mw}=\mu_{mu}$, since \textit{Wolbachia} infection does not affect the lifespan of the males significantly in general. We let $EE^*=(A_u^*,A_w^*,F_u^*,F_w^*,F_{pu}^*,F_{pw}^*,M_u^*,M_w^*)$ denote the EE, where
\begin{align*}
\begin{split}
A_u^*&=\frac{K_a}{1+r_{wu}}\left(1-\frac{1}{\G_{0w}}\right),\\
A_w^*&=r_{wu}\,A_u^*,\\
F_u^*&=b_f\,\frac{\psi}{\sigma+\mu_{fu}}A_u^*,\\
F_w^*&=r_{wu}\,b_f\,\frac{\psi}{\sigma+\mu_{fw}}A_u^*,\\
F_{pu}^*&=\frac{1}{1+r_{wu}}\,b_f\,\frac{\psi\sigma}{(\mu_{fu}+\sigma)\mu_{fu}}A_u^*,\\
F_{pw}^*&=r_{wu}\,b_f\,\frac{\psi\sigma}{(\mu_{fw}+\sigma)\mu_{fw}}A_u^*,\\
M_u^*&=b_m\,\frac{\psi}{\mu_{mu}}A_u^*,\\
M_w^*&=r_{wu}\,b_m\,\frac{\psi}{\mu_{mw}}A_u^*,
\end{split}
\end{align*}
and ratio $r_{wu}>0$ satisfies the following equation
\begin{equation}
\frac{v_u}{v_w}r_{wu}^2+\left(\frac{v_u}{v_w}-1\right)r_{wu}+\frac{1-\R_0}{\R_0}=0,
\label{eq:eqn_r}
\end{equation}
where $\R_0$ is the basic reproductive number defined in \cref{eq:R0}.

When there is perfect maternal transmission ($v_w=1$), \cref{eq:eqn_r} is linear with the solution
\begin{equation}
r_{wu}^* =\frac{A_w^*}{A_u^*}=\frac{1-\R_0}{\R_0}  \quad \mbox{when}\quad 0<\R_0<1,\label{eq:EEstar}
\end{equation} 
and we denote the corresponding unique EE as $EE^*$.

When there is imperfect maternal transmission ($v_w<1$), there are two roots for equation \cref{eq:eqn_r}
\begin{align}
r_{wu}^+ &=\frac{1}{2 v_u}\left(2 v_w-1+\sqrt{1-\frac{4 v_u v_w}{\R_0}}\right)  \quad\mbox{and}\quad \label{eq:EEplus}\\
r_{wu}^- &=\frac{1}{2 v_u}\left(2 v_w-1-\sqrt{1-\frac{4 v_u v_w}{\R_0}}\right) \label{eq:EEminus}, 
\end{align}
corresponding to two EE, denoted by $EE^+$ and $EE^-$. The roots must be real and positive for the EE to be physically meaningful. This implies that there is no EE when $\R_0<4 v_u v_w$. 

Assume $0.5<v_w<1$ (for most strains of \textit{Wolbachia}  $v_w \approx 1$), then $4 v_u v_w=4(1-v_w)v_w<1$, and we have the following:
\begin{enumerate}[(i)]
\item when $\R_0=4 v_u v_w$, there is a single root $r_{wu}^\pm=r_{wu}^+= r_{wu}^-= (2v_w-1)/(2 v_u)$ and a single $EE^\pm=EE^+=EE^-$;
\item when $4 v_u v_w<\R_0<1$, we have $r_{wu}^+>r_{wu}^->0$, and there are two meaningful EE, $EE^+$ and $EE^-$; 
\item when $\R_0\ge 1$, $r_{wu}^-\le 0$ and only the positive root $r_{wu}^+$ and $EE^+$ is physically meaningful.  
\end{enumerate}
Note that $v_w \approx 1$ for the strains we are considering and condition becomes $\R_0>4 v_u v_w\approx 0$.

\section{Stability and Bifurcation Analysis}\label{sec:sec_4}
The stability of these equilibria is governed by the sign of the eigenvalues of the Jacobian for the equations, \cref{eq:ODEa,eq:ODEb,eq:ODEc,eq:ODEd,eq:ODEe,eq:ODEf,eq:ODEg,eq:ODEh}, linearized about each equilibrium point  (\cref{tab:table_s}).  The solution dynamics can then be characterized by using bifurcation diagrams to illustrate the threshold conditions for establishing an endemic \textit{Wolbachia}-infected population.

To simplify the structure of the Jacobian of nonlinear system \cref{eq:ODEa,eq:ODEb,eq:ODEc,eq:ODEd,eq:ODEe,eq:ODEf,eq:ODEg,eq:ODEh},
we rearrange the order of compartments as $\mathbf{Y}=(A_u, F_u, F_{pu}, M_u, A_w, F_w, F_{pw}, M_w)$. The corresponding Jacobian of the rearranged system, $\frac{d\mathbf{Y}}{dt}=J\mathbf{Y}$,  is 
\begin{align}
\mathbf{J}&=\left(\begin{array}{c;{2pt/2pt}c}
A & B\\ \hdashline[2pt/2pt]
C & D
\end{array}
\right) \\ 
&=\left(
\begin{array}{cccc;{2pt/2pt}cccc}
a_{11} & 0 & \eta_u & 0 & b_{11} & 0 & v_u \eta_w & 0 \\
b_f\psi & -\sigma-\mu_{fu} & 0 & 0 & 0 & 0 & 0 & 0 \\
0 & \sigma m_u & -\mu_{fu} & a_{34} & 0 & 0 & 0 & b_{34}\\
b_m\psi & 0 & 0 & -\mu_{mu} & 0 & 0 & 0 & 0\\ \hdashline[2pt/2pt]
c_{11} & 0 & 0 & 0 & d_{11} & 0 & v_w\eta_w & 0\\
0 & 0 & 0 & 0 & b_f\psi & -\sigma-\mu_{fw} & 0 & 0\\
0 & 0 & 0 & 0 & 0 & \sigma & -\mu_{fw} & 0\\
0 & 0 & 0 & 0 & b_m\psi & 0 & 0 & -\mu_{mw}
\end{array}
\right) 
\label{eq:Jac}
\end{align}
where 
\begin{align*}
a_{11} &= -\phi_u\frac{F_{pu}}{K_a}- v_u\phi_w\frac{F_{pw}}{K_a}-(\mu_a+\psi)~,~ &a_{34}&=\sigma m_w \frac{F_u}{M_u+M_w},\\
b_{11}  &= -\phi_u\frac{F_{pu}}{K_a}- v_u\phi_w\frac{F_{pw}}{K_a}~,~ &b_{34}&=-\sigma m_u \frac{F_u}{M_u+M_w},\\
c_{11} &=- v_w\phi_w\frac{F_{pw}}{K_a}~,~ &d_{11}&=- v_w\phi_w\frac{F_{pw}}{K_a}-(\mu_a+\psi).
\end{align*}

\subsection{Stability of the Disease-free Equilbirium}
At the DFE, we write the Jacobian as \cref{eq:Jac}

\begin{equation}
J_{DFE}=\left(
\begin{array}{c;{2pt/2pt}c}
A_{DFE} & B_{DFE}\\ \hdashline[2pt/2pt]
0 & D_{DFE}
\end{array}
\right),
\label{eq:Jac_DFE}
\end{equation}
where 
\begin{equation*}
A_{DFE} = \left(
\begin{array}{cccc}
-\G_{0u}(\mu_a+\psi) & 0 & \frac{\phi_u}{\G_{0u}} & 0\\
b_f\psi & -\sigma-\mu_{fu} & 0 & 0\\
0 & \sigma & -\mu_{fu} & 0\\
b_m\psi & 0 & 0 & -\mu_{mu}
\end{array}
\right),
\end{equation*}
and 
\begin{equation*}
D_{DFE} = \left(
\begin{array}{cccc}
-(\mu_a+\psi) & 0 & \frac{v_w\phi_w}{\G_{0u}} & 0\\
b_f\psi & -\sigma-\mu_{fw} & 0 & 0\\
0 & \sigma & -\mu_{fw} & 0\\
b_m\psi & 0 & 0 & -\mu_{mw}
\end{array}
\right).
\end{equation*}

Because $J_{DEF}$ is an upper triangular block matrix, the eigenvalues of matrix $J_{DFE}$ are the collection of those for matrix $A_{DFE}$ and $D_{DFE}$.

\begin{theorem}[Stability of Disease-free Equilibrium]\label{thm:th_DFE}
The disease-free equilibrium $EE^0=(A_u^0,0,F_u^0,0,F_{pu}^0,0,M_u^0,0)$ and \cref{eq:DFE} of the system \cref{eq:ODEa,eq:ODEb,eq:ODEc,eq:ODEd,eq:ODEe,eq:ODEf,eq:ODEg,eq:ODEh} is locally asymptotically stable (LAS) if $\G_{0u}>1$ and $\R_0<1$.
\end{theorem}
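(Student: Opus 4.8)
The plan is to exploit the block upper-triangular structure of $J_{DFE}$ in \cref{eq:Jac_DFE}, already established above, which tells us that the spectrum of $J_{DFE}$ is the disjoint union of the spectra of $A_{DFE}$ and $D_{DFE}$. Local asymptotic stability is then equivalent to showing that every eigenvalue of each $4\times 4$ block has strictly negative real part, and I would treat the two blocks in parallel since they share an identical sparsity pattern.

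For each block I would first peel off the decoupled male eigenvalue. In $A_{DFE}$ the fourth column is $(0,0,0,-\mu_{mu})^{T}$, so $-\mu_{mu}<0$ is immediately an eigenvalue and the $M_u$ variable decouples; the surviving dynamics is governed by the $3\times 3$ principal submatrix $\tilde A$ in the $(A_u,F_u,F_{pu})$ coordinates. Likewise $-\mu_{mw}<0$ splits off from $D_{DFE}$, leaving a $3\times 3$ block $\tilde D$. Each reduced matrix has the cyclic ``feedback loop'' form $\left(\begin{smallmatrix} -a & 0 & f\\ b & -c & 0\\ 0 & d & -e\end{smallmatrix}\right)$ with all of $a,b,c,d,e,f>0$, whose characteristic equation is $(\lambda+a)(\lambda+c)(\lambda+e)=fbd$, i.e. $\lambda^3+p_1\lambda^2+p_2\lambda+p_3=0$ with $p_1=a+c+e$, $p_2=ac+ae+ce$, and $p_3=ace-fbd$.

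The core of the argument is the Routh--Hurwitz test for this cubic, which requires $p_1>0$, $p_3>0$, and $p_1p_2>p_3$. The coefficients $p_1,p_2$ are sums of products of positive rates, so $p_1>0$ and $p_2>0$ are automatic, and expanding $p_1p_2$ shows it contains the term $3ace$ together with further nonnegative terms, so $p_1p_2>ace>ace-fbd=p_3$ holds unconditionally. Everything therefore hinges on the sign of the constant term $p_3=ace-fbd$, and here the definitions of the dimensionless numbers do the work: for $\tilde A$ one computes $ace=b_f\psi\sigma\phi_u$ and $fbd=b_f\psi\sigma\phi_u/\G_{0u}$ using \cref{eq:R0u}, so $p_3=b_f\psi\sigma\phi_u\,(\G_{0u}-1)/\G_{0u}$, positive precisely when $\G_{0u}>1$. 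For $\tilde D$ the analogous computation with \cref{eq:R0w} gives $f'b'd'=\R_0\,a'c'e'$ once one recognizes $\G_{0w}/\G_{0u}=\R_0$ from \cref{eq:R0}, hence $p_3'=(\mu_a+\psi)(\sigma+\mu_{fw})\mu_{fw}(1-\R_0)$, positive precisely when $\R_0<1$. Combining, all eigenvalues of both blocks have negative real part under the two hypotheses, yielding LAS.

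I anticipate the only real obstacle to be bookkeeping rather than conceptual: correctly threading the definitions \cref{eq:R0u}, \cref{eq:R0w}, \cref{eq:R0} through the constant terms so that $p_3$ and $p_3'$ collapse to the clean factors $(\G_{0u}-1)$ and $(1-\R_0)$. As an alternative that sidesteps the $p_1p_2>p_3$ verification entirely, I would note that each reduced $3\times 3$ matrix is Metzler (all off-diagonal entries nonnegative), so it is Hurwitz if and only if $-\tilde A$ is a nonsingular M-matrix, i.e. iff its leading principal minors $a$, $ac$, and $ace-fbd$ are all positive; this again reduces to the single sign condition on $p_3$ and reproduces the thresholds $\G_{0u}>1$ and $\R_0<1$.
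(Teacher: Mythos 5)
Your proof is correct, but its core step takes a genuinely different route from the paper's. Both arguments share the same skeleton---the block upper-triangular structure of $J_{DFE}$, peeling off the decoupled male eigenvalues $-\mu_{mu}$ and $-\mu_{mw}$, and reducing to the two $3\times 3$ leading principal submatrices---but where you expand the characteristic cubic $(\lambda+a)(\lambda+c)(\lambda+e)=fbd$ and run the Routh--Hurwitz test, the paper instead invokes a Metzler-matrix stability criterion (Proposition 3.1 of \cite{kamgang2008computation}): partitioning each $3\times 3$ block as $\bigl(\begin{smallmatrix} A_i & B_i\\ C_i & D_i\end{smallmatrix}\bigr)$, Metzler stability holds if and only if $A_i$ and the scalar Schur complement $D_i - C_iA_i^{-1}B_i$ are both Metzler stable, which collapses immediately to $-\mu_{fu}\left(1-1/\G_{0u}\right)<0$ for $\G_{0u}>1$ and $-\mu_{fw}(1-\R_0)<0$ for $\R_0<1$. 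Your route is more elementary and self-contained (no external matrix-theoretic citation), at the price of the extra inequality $p_1p_2>p_3$, which you correctly dispose of unconditionally since the expansion of $p_1p_2$ contains $3ace$ plus positive terms while $p_3=ace-fbd<ace$; your constant-term computations $p_3 = b_f\psi\sigma\phi_u\,(\G_{0u}-1)/\G_{0u}$ and $p_3' = (\mu_a+\psi)(\sigma+\mu_{fw})\mu_{fw}\,(1-\R_0)$ check out against \cref{eq:R0u}, \cref{eq:R0w}, and \cref{eq:R0}, so the thresholds you obtain match the paper's exactly. The paper's Schur-complement route buys cleaner bookkeeping---the threshold appears directly as a single scalar with no polynomial expansion---and your closing alternative via nonsingular M-matrices and positive leading principal minors is essentially the paper's argument in different clothing: for Metzler matrices the two criteria are equivalent, and both reduce to the same single sign condition that you isolate in $p_3$.
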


\begin{proof}
To prove the stability of the matrices, we apply a result on Metzler matrices (Proposition 3.1 in \cite{kamgang2008computation}). 
At the DFE, the Jacobian is partitioned as
\begin{equation*}
J_{DFE}=\left(
\begin{array}{c;{2pt/2pt}c}
A_{DFE} & B_{DFE}\\ \hdashline[2pt/2pt]
0 & D_{DFE}
\end{array}
\right).
\end{equation*}

We first prove the stability of matrix
\begin{equation*}
A_{DFE} = \left(
\begin{array}{cccc}
-\G_{0u}(\mu_a+\psi) & 0 & \frac{\phi_u}{\G_{0u}} & 0\\
b_f\psi & -\sigma-\mu_{fu} & 0 & 0\\
0 & \sigma & -\mu_{fu} & 0\\
b_m\psi & 0 & 0 & -\mu_{mu}
\end{array}
\right).
\end{equation*}
The $(4,4)$ element of matrix $A_{DFE}$, $-\mu_{mu}<0$, is a negative eigenvalue. Therefore, we can reduce the problem to considering  the $3\times 3$ leading principal submatrix of $A_{DFE}$, which we partitioned as 
\begin{equation*}
A_{s1} =
\left(
\begin{array}{cc;{2pt/2pt}c}
-\G_{0u}(\mu_a+\psi) & 0 & \frac{\phi_u}{\G_{0u}}\\
b_f\psi & -\sigma-\mu_{fu} & 0\\ \hdashline[2pt/2pt]
0 & \sigma & -\mu_{fu} 
\end{array}
\right)
 =\left(
\begin{array}{c;{2pt/2pt}c}
A_1 & B_1\\\hdashline[2pt/2pt]
C_1 & D_1
\end{array}
\right).
\end{equation*}
$A_{s1}$ is a Metzler matrix \cite{kamgang2008computation} and is Metzler stable if and only if both $A_1$ and $D_1-C_1A_1^{-1}B_1$ are Metzler stable.  Metzler stability of $A_1$ follows because it is a lower triangular matrix with negative diagonal entries and nonnegative off-diagonal entries, and 
\begin{equation*}
D_1-C_1A_1^{-1}B_1=-\mu_{fu}\left(1-\frac{1}{\G_{0u}}\right)<0 \quad\mbox{provided}\quad \G_{0u}>1~.
\end{equation*}

Now we consider the stability of
\begin{equation*}
D_{DFE} = \left(
\begin{array}{cccc}
-(\mu_a+\psi) & 0 & \frac{v_w\phi_w}{\G_{0u}} & 0\\
b_f\psi & -\sigma-\mu_{fw} & 0 & 0\\
0 & \sigma & -\mu_{fw} & 0\\
b_m\psi & 0 & 0 & -\mu_{mw}
\end{array}
\right).
\end{equation*}
The $(4,4)$ entry: $-\mu_{mw}<0$ is a negative eigenvalue of $D_{DFE}$, and therefore we need only  consider the $3\times 3$ leading principal submatrix
\begin{equation*}
D_{s1} = \left(
\begin{array}{cc;{2pt/2pt}c}
-(\mu_a+\psi) & 0 & \frac{v_w\phi_w}{\G_{0u}} \\
b_f\psi & -\sigma-\mu_{fw} & 0 \\ \hdashline[2pt/2pt]
0 & \sigma & -\mu_{fw} 
\end{array}
\right) =\left(
\begin{array}{c;{2pt/2pt}c}
A_2 & B_2\\\hdashline[2pt/2pt]
C_2 & D_2
\end{array}
\right),
\end{equation*}
which is a Metzler matrix. Since  $A_2$ is Metzler stable and 
\begin{equation*}
D_2-C_2A_2^{-1}B_2=-\mu_{fw}(1-\R_0)<0 \quad\mbox{provided}\quad \R_0<1,
\end{equation*}
thus $D_{s1}$ is Metzler stable.

Therefore, the Jacobian $J_{DFE}$ is stable, all the eigenvalues are negative,  and the DFE is stable if $\G_{0u}>1$ and $\R_0<1$. 
\end{proof}

\subsection{Complete-infection Equilibrium}
At the CIE, \cref{eq:Jac} becomes
\begin{equation}
J_{CIE}=\left(
\begin{array}{c;{2pt/2pt}c}
A_{CIE} & 0\\ \hdashline[2pt/2pt]
C_{CIE} & D_{CIE}
\end{array}
\right),
\label{eq:Jac_CIE}
\end{equation}
where 
\begin{equation*}
A_{CIE} = \left(
\begin{array}{cccc}
-(\mu_a+\psi) & 0 & \frac{\phi_u}{\G_{0w}} & 0\\
b_f\psi & -\sigma-\mu_{fu} & 0 & 0\\
0 & 0 & -\mu_{fu} & 0\\
b_m\psi & 0 & 0 & -\mu_{mu}
\end{array}
\right),
\end{equation*}
and
\begin{equation*}
D_{CIE} = \left(
\begin{array}{cccc}
-\G_{0w}(\mu_a+\psi) & 0 & \frac{\phi_w}{\G_{0w}} & 0\\
b_f\psi & -\sigma-\mu_{fw} & 0 & 0\\
0 & \sigma & -\mu_{fw} & 0\\
b_m\psi & 0 & 0 & -\mu_{mw}
\end{array}
\right).
\end{equation*}
Because $J_{CIE}$ is a lower triangular block matrix, the eigenvalues of matrix $J_{CIE}$ are the collection of those for matrix $A_{CIE}$ and $D_{CIE}$.

\begin{theorem}[Stability of Complete-infection Equilibrium]\label{thm:th_CIE}
The complete-infection equilibrium $EE^c=(0,A_w^c,0,F_w^c,0,F_{pw}^c,0,M_w^c)$ and \cref{eq:CIE} of the system \cref{eq:ODEa,eq:ODEb,eq:ODEc,eq:ODEd,eq:ODEe,eq:ODEf,eq:ODEg,eq:ODEh} is LAS if $\G_{0w}>1$.
\end{theorem}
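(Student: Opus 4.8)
The plan is to exploit the lower block-triangular structure of $J_{CIE}$ in \cref{eq:Jac_CIE}, exactly mirroring the argument already used for the DFE in \cref{thm:th_DFE}. Because $J_{CIE}$ is lower block triangular, its spectrum is the union of the spectra of the two diagonal blocks $A_{CIE}$ and $D_{CIE}$, so it suffices to show that each block has all of its eigenvalues in the open left half-plane. I would therefore treat the two blocks separately, and I expect the threshold $\G_{0w}>1$ to come entirely from $D_{CIE}$.

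First I would dispatch $A_{CIE}$, which I expect to be stable with no condition at all. The key simplification is that at the CIE every adult male is infected, so the probability of mating with an uninfected male vanishes, $m_u = M_u/(M_u+M_w)=0$, which annihilates the $(3,2)$ entry $\sigma m_u$ of the general Jacobian \cref{eq:Jac}. The $F_{pu}$ row of $A_{CIE}$ then collapses to $(0,0,-\mu_{fu},0)$, so $-\mu_{fu}$ is immediately an eigenvalue; deleting that row and column leaves a lower-triangular $3\times 3$ matrix on the indices $(A_u,F_u,M_u)$ with diagonal entries $-(\mu_a+\psi)$, $-(\sigma+\mu_{fu})$, and $-\mu_{mu}$. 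Hence all four eigenvalues of $A_{CIE}$ are negative, independently of $\G_{0w}$.

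The stability of $D_{CIE}$ is where $\G_{0w}>1$ enters, and the argument is the $A_{DFE}$ computation verbatim with the uninfected labels replaced by their infected counterparts and $\G_{0u}$ replaced by $\G_{0w}$. The $(4,4)$ entry $-\mu_{mw}<0$ splits off as a negative eigenvalue, reducing the problem to the $3\times 3$ leading principal submatrix $D_{s1}$, which is a Metzler matrix. Applying the Metzler criterion (Proposition 3.1 in \cite{kamgang2008computation}), I would partition $D_{s1}$ into its lower-triangular, Metzler-stable $2\times 2$ leading block on $(A_w,F_w)$ and the scalar block on $F_{pw}$, and then form the corresponding Schur complement. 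Substituting $v_w=1$ (which is forced at any CIE) together with the definition of $\G_{0w}$ in \cref{eq:R0w}, the Schur complement should simplify to $-\mu_{fw}(1-1/\G_{0w})$, which is negative precisely when $\G_{0w}>1$.

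Combining the two blocks, $J_{CIE}$ has all eigenvalues with negative real part whenever $\G_{0w}>1$, which gives local asymptotic stability. The only genuinely new bookkeeping relative to \cref{thm:th_DFE} is the vanishing of the cross-mating term in $A_{CIE}$, and the main ``obstacle'' — such as it is — is carefully feeding the definition of $\G_{0w}$ (with $v_w=1$) into the Schur complement so that the entries $-\G_{0w}(\mu_a+\psi)$ and $\phi_w/\G_{0w}$ collapse to the clean threshold factor $1-1/\G_{0w}$.
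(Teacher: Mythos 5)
Your proposal is correct and follows essentially the same route as the paper's proof: it exploits the lower block-triangular structure of $J_{CIE}$, shows $A_{CIE}$ is stable unconditionally, and extracts the condition $\G_{0w}>1$ from the Schur complement $-\mu_{fw}\left(1-1/\G_{0w}\right)$ of the $3\times 3$ leading block of $D_{CIE}$ (with $v_w=1$ built in) via the Metzler-matrix criterion of \cite{kamgang2008computation}. The only cosmetic difference is your handling of $A_{CIE}$, where you split off the $F_{pu}$ row and column directly to leave a lower-triangular matrix, whereas the paper first splits off $-\mu_{mu}$ and then applies the Metzler criterion with a trivially vanishing coupling block; both computations are valid and give the same conclusion.
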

The proof, presented in \cref{sec:A2}, is similar to the proof of theorem \cref{thm:th_DFE}.

\subsection{Stability of the Endemic Equilibrium} 
At the EE, \cref{eq:Jac} becomes
\begin{equation*}
J_{EE}=\left(
\begin{array}{c;{2pt/2pt}c}
A_{EE} & B_{EE}\\ \hdashline[2pt/2pt]
C_{EE} & D_{EE}
\end{array}
\right).
\end{equation*}
Unlike the previous two cases, where we have nice upper/lower diagonal block-matrix, in this case, we have a full $8\times 8$ matrix, and the theoretical analysis of the eigenvalues of this matrix is beyond the ability of the authors. However, we are able to numerically verify the following conclusion:

\begin{theorem}[Stability of Endemic Equilibrium]
When the maternal transmission is perfect, $v_w=1$, the endemic equilibrium $EE^*$ (for $\R_0<1$ and $\G_{0w}>1$) is an unstable equilibrium of the system \cref{eq:ODEa,eq:ODEb,eq:ODEc,eq:ODEd,eq:ODEe,eq:ODEf,eq:ODEg,eq:ODEh}. When the maternal transmission is imperfect $v_w<1$, the endemic equilibrium $EE^+$ (for $\R_0>4 v_u v_w$) is a LAS equilibrium, and $EE^-$ (for $4 v_u v_w<\R_0<1$) is an unstable equilibrium.
\end{theorem}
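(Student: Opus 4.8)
The plan is to break the argument into a local bifurcation analysis at the two organizing centers of the endemic branches — the transcritical point $\R_0=1$, where $EE^-$ (respectively $EE^*$ when $v_w=1$) detaches from the DFE, and the saddle-node point $\R_0=4v_uv_w$, where $EE^+$ and $EE^-$ coalesce — and then to confirm the interior stability of $EE^+$ numerically. The reason a fully analytic route is awkward is exactly the obstruction the statement alludes to: at an EE the off-diagonal blocks $B_{EE}$ and $C_{EE}$ of \cref{eq:Jac} are both nonzero (the males feed back into the $F_{pu}$ equation through $a_{34},b_{34}$, and the infected aquatic class feeds back through $c_{11}$), so $J_{EE}$ does not triangularize the way $J_{DFE}$ and $J_{CIE}$ did, and its characteristic polynomial is a genuine degree-$8$ polynomial in thirteen parameters.

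First I would set up the center-manifold computation of Castillo-Chavez and Song. Choosing a bifurcation parameter — say $\phi_w$ (equivalently $v_w$), tuned to a critical value $\phi_w^*$ at which $\R_0=1$ — the DFE has, by \cref{thm:th_DFE} and its proof, all eigenvalues with negative real part except a simple zero eigenvalue at $\R_0=1$ (the factor $-\mu_{fw}(1-\R_0)$ in the reduced block vanishes there). I would compute the right null vector $\mathbf{w}$ and left null vector $\mathbf{v}$ of $J_{DFE}$ at $\R_0=1$, normalize them, and evaluate the two scalars
\begin{equation*}
a=\sum_{k,i,j} v_k w_i w_j \frac{\partial^2 f_k}{\partial x_i\,\partial x_j}\Big|_{EE^0},\qquad b=\sum_{k,i} v_k w_i \frac{\partial^2 f_k}{\partial x_i\,\partial \phi_w}\Big|_{EE^0},
\end{equation*}
where the $f_k$ are the right-hand sides of \cref{eq:ODEa,eq:ODEb,eq:ODEc,eq:ODEd,eq:ODEe,eq:ODEf,eq:ODEg,eq:ODEh}. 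The only nonlinearities are the quadratic oviposition terms and the mating term $\sigma m_u F_u$, so $a$ is explicit and tractable. I expect $a>0$ and $b>0$, which by the Castillo-Chavez–Song theorem certifies a backward (subcritical) bifurcation: the endemic branch present for $\R_0<1$ bifurcates to the left and is locally unstable. Since the cytoplasmic-incompatibility/sterility mechanism driving this backwardness persists as $v_w\to 1$, the same conclusion covers the perfect-transmission branch $EE^*$ (for which $r_{wu}^*=(1-\R_0)/\R_0\to 0$ as $\R_0\to 1^-$, so it does collide with the DFE there), establishing its instability.

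Next I would analyze the fold at $\R_0=4v_uv_w$, where \cref{eq:EEplus,eq:EEminus} give $r_{wu}^+=r_{wu}^-$. There a single real eigenvalue of $J_{EE}$ crosses zero, and the saddle-node normal form forces opposite signs of this eigenvalue on the two merging branches, identifying $EE^+$ as the locally stable side and $EE^-$ as the unstable side near the fold. Combined with the center-manifold result at $\R_0=1$, this pins down the stability types at both ends of the interval $4v_uv_w<\R_0<1$; I would then argue that the lone unstable eigenvalue of $EE^-$ cannot return through the origin in the interior without creating an extra root of \cref{eq:eqn_r}, which it forbids, so $EE^-$ stays unstable throughout its range.

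The main obstacle is showing that $EE^+$ remains locally asymptotically stable in the interior of its existence range and for $\R_0>1$, away from both bifurcation points: this means excluding a Hopf crossing of a complex-conjugate pair, i.e. verifying the full Routh-Hurwitz conditions for the degree-$8$ characteristic polynomial of $J_{EE}$. With this many parameters those inequalities admit no usable closed form, which is precisely why the statement is framed as a numerical verification. I would therefore close the argument by evaluating the eigenvalues of $J_{EE}$ (or the Routh-Hurwitz determinants) on a dense scan of the admissible parameter box in \cref{tab:parameter}, sweeping $\R_0$ across $(4v_uv_w,1)$ and beyond, and confirming that every eigenvalue of $EE^+$ has negative real part while exactly one eigenvalue of $EE^-$ is positive — corroborating the analytic conclusions at the two bifurcation points over the entire physical range.
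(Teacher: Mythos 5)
The first thing to say is that the paper does not prove this theorem analytically at all: it observes that $J_{EE}$ is a full $8\times 8$ matrix that does not block-triangularize the way $J_{DFE}$ and $J_{CIE}$ do, states that the eigenvalue analysis is beyond the authors, and verifies the claimed stability types purely by numerical computation of the spectrum at each equilibrium (summarized in \cref{tab:table_s} and illustrated by the bifurcation diagrams of \cref{sec:sec_5}). Your proposal is therefore a genuinely different and more ambitious route. You extract analytic information at the two organizing centers---the transcritical point $\R_0=1$, where the lower branch collides with the DFE (correctly identified: since $1-4v_uv_w=(2v_w-1)^2$, one has $r_{wu}^-\to 0$ as $\R_0\to 1^-$, and $r_{wu}^*\to 0$ when $v_w=1$), and the fold $\R_0=4v_uv_w$---and you reserve numerics only for excluding Hopf crossings along the $EE^+$ branch. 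If carried out, this would give analytic proofs of the instability assertions near both bifurcation points and an analytic certification that the bifurcation is backward, neither of which the paper provides; the cost is that the argument is longer and, in the end, still leans on the same kind of numerical sweep the paper uses for the whole theorem.

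Three steps of your sketch, however, are asserted rather than established, and all are load-bearing. (i) The Castillo-Chavez--Song conclusion needs the computed signs $a>0$, $b>0$; ``I expect'' is not enough. Either do the computation (feasible, as you note, since the only nonlinearities are the quadratic oviposition terms and the rational mating term $\sigma m_u F_u$), or argue indirectly: $b>0$ by transversality, and $a>0$ is forced because the explicit branch \cref{eq:EEminus} exists for $\R_0<1$ and limits onto the DFE at $\R_0=1$, which is incompatible with a forward bifurcation. (ii) The continuation step---that the unstable eigenvalue of $EE^-$ ``cannot return through the origin without creating an extra root of \cref{eq:eqn_r}''---does not follow as stated: a zero eigenvalue of $J_{EE^-}$ is not the same thing as a double root of the scalar reduced equation \cref{eq:eqn_r} unless you prove a Lyapunov--Schmidt-type identity relating $\det J_{EE}$ to the derivative of that quadratic, with the eliminated factors shown nonvanishing. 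A cleaner repair is a parity argument: for a real $8\times 8$ Jacobian, exactly one eigenvalue in the open right half-plane forces $\det J_{EE^-}<0$, while local asymptotic stability forces $\det J_{EE^-}>0$; so stabilization anywhere in $(4v_uv_w,1)$ would require $\det J_{EE^-}=0$ somewhere, and it suffices to show this single scalar never vanishes there---much easier to control than the full spectrum. The same repair is what extends instability of $EE^*$ from a neighborhood of $\R_0=1$ to the full range $\R_0<1$, $\G_{0w}>1$ claimed in the theorem. (iii) At the fold, the saddle-node normal form assigns opposite stability only along the one-dimensional center manifold; concluding that $EE^+$ is LAS near $\R_0=4v_uv_w$ additionally requires the seven transverse eigenvalues of the degenerate Jacobian to lie in the open left half-plane, a hypothesis you never state and which itself falls to numerics. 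With (i)--(iii) repaired, your plan is sound, and its residual numerical component coincides with what the paper actually does.
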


The stability of the three EE is summarized in \cref{tab:table_s}.
\begin{table}[thb]
\centering
\caption{Existence and stability of equilibrium points for \textit{Wolbachia} model \cref{eq:ODEa,eq:ODEb,eq:ODEc,eq:ODEd,eq:ODEe,eq:ODEf,eq:ODEg,eq:ODEh} for both perfect and imperfect maternal transmissions.\cref{eq:EEstar}\cref{eq:EEplus}\cref{eq:EEminus}
\label{tab:table_s}}
\begin{tabular}{m{0.26\textwidth}m{0.12\textwidth}m{0.12\textwidth}m{0.35\textwidth}}
\toprule
 & DFE  & CIE & EE\\ 
\midrule
Perfect maternal \newline transmission\newline ($ v_w=1$) & $\R_0<1$  \newline  $\G_{0u}>1$  \newline  (LAS)  & $\G_{0w}>1$ \newline  (LAS) & $\R_0<1$ and $\G_{0w}>1$ \newline $\circ$ $r_{wu}=r_{wu}^*=(1-\R_0)/\R_0$ ;\newline $EE^*$ (unstable) \\ 
\midrule
Imperfect maternal \newline transmission\newline ($ v_w<1$) & $\R_0<1$ \newline $\G_{0u}>1$ \newline (LAS) & \emph{N/A} & 
\underline{$4 v_u v_w<\R_0<1$}\newline $\circ$ $r_{wu}=r_{wu}^+$ ; $EE^+$ (LAS) \newline $\circ$ $r_{wu}=r_{wu}^-$ ; $EE^-$ (unstable) \newline \underline{$\R_0>1$}\newline $\circ$ $r_{wu}=r_{wu}^+$; $EE^+$ (LAS) \\
\bottomrule
\end{tabular}
\end{table}

\section{Bifurcation Analysis}\label{sec:sec_5}
\begin{figure}[tb]
\centering
\includegraphics[width =0.6\textwidth]{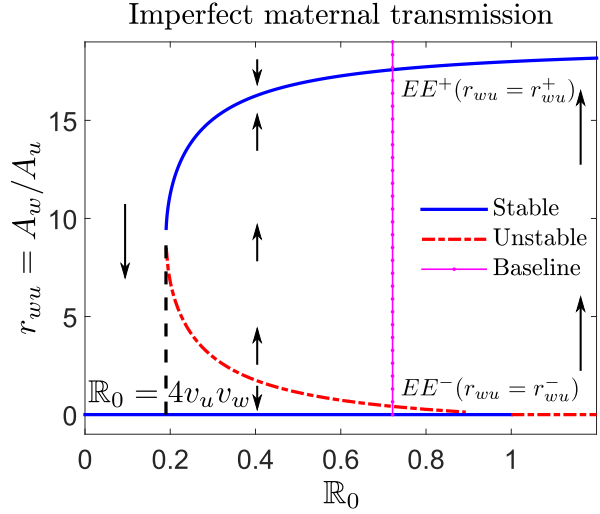}\label{fig:bifur}
\caption{Backward bifurcation diagram for imperfect maternal transmission ($v_w=0.95$). Both the stable equilibrium points DFE (horizontal line for $0<\R_0<1$) and stable branch of the EE (upper branch of the ``fork'') are represented by solid blue curves. The unstable branch of the EE (lower branch of the ``fork'') is represented by the dashed red curve, which is the threshold condition for having stable \textit{Wolbachia} endemic. Two branches meet at $\R_0=4 v_u v_w (=0.19)$. The baseline case ($\R_0=0.721$) is marked by the vertical magenta line with dots. The arrows indicate the direction of the phase flows.}
\end{figure}

\textit{Wolbachia} infection is not naturally found for \textit{Ae. aegypti} mosquitoes, suggesting that $\R_0<1$ for wild mosquitoes. We need to introduce infected mosquitoes into the environment for the system to surpass the threshold condition. In the case of imperfect maternal transmission, this threshold condition is described by the backward bifurcation diagram in \cref{fig:bifur}. The y-axis of the diagram is the ratio $r_{wu}$ introduced in \cref{sec:sec_34}. The DFE is marked by a horizontal solid line, where $r_{wu}=0$ for $0<\R_0<1$. The baseline case (in \cref{tab:parameter}) is highlighted by a vertical magenta line with dots.

When $0<\R_0<4 v_u v_w$ is small, DFE is the only steady state, and it is globally stable. When $\R_0>1$, the only stable steady state is the upper branch of EE. At $\R_0=4 v_u v_w$, there appear three equilibrium states, and in the interval $4 v_u v_w<\R_0<1$ the DFE is stable, the middle $EE^-$ is unstable, and the upper $EE^+$ state is stable. The lower branch state $EE^-$ is the threshold condition for having endemic \textit{Wolbachia}: Below the threshold state $EE^-$, the \textit{Wolbachia}-infected mosquitoes that have been introduced to the environment are wiped out by the wild population, and the system goes back to DFE; Above the threshold $EE^-$, the \textit{Wolbachia}-infected mosquitoes are able to gradually invade into the wild environment, and at some point, the endemic equilibrium $EE^+$ is achieved, where both infected and uninfected mosquitoes are coexisted in the environment. 

In \cref{fig:bifur2} we define the vertical axis by the percentage of infected females, including both the infected nonpregnant females $F_w$ and infected pregnant females $F_{pw}$.  We have found this representation provides a more intuitive understanding of the bifurcation process since as the maternal vertical transmission rate decreases, the threshold condition (unstable EE) increases and the prevalence of infection (percentage of infected females) decreases. It is possible to establish a stable endemic state over a wide range of $\R_0$ values as long as a significant fraction of the mosquito population is infected and the maternal vertical transmission rate is high (e.g. $ v_w  > 0.99$).

\begin{figure}[t]
\centering
\includegraphics[height =0.35\textwidth]{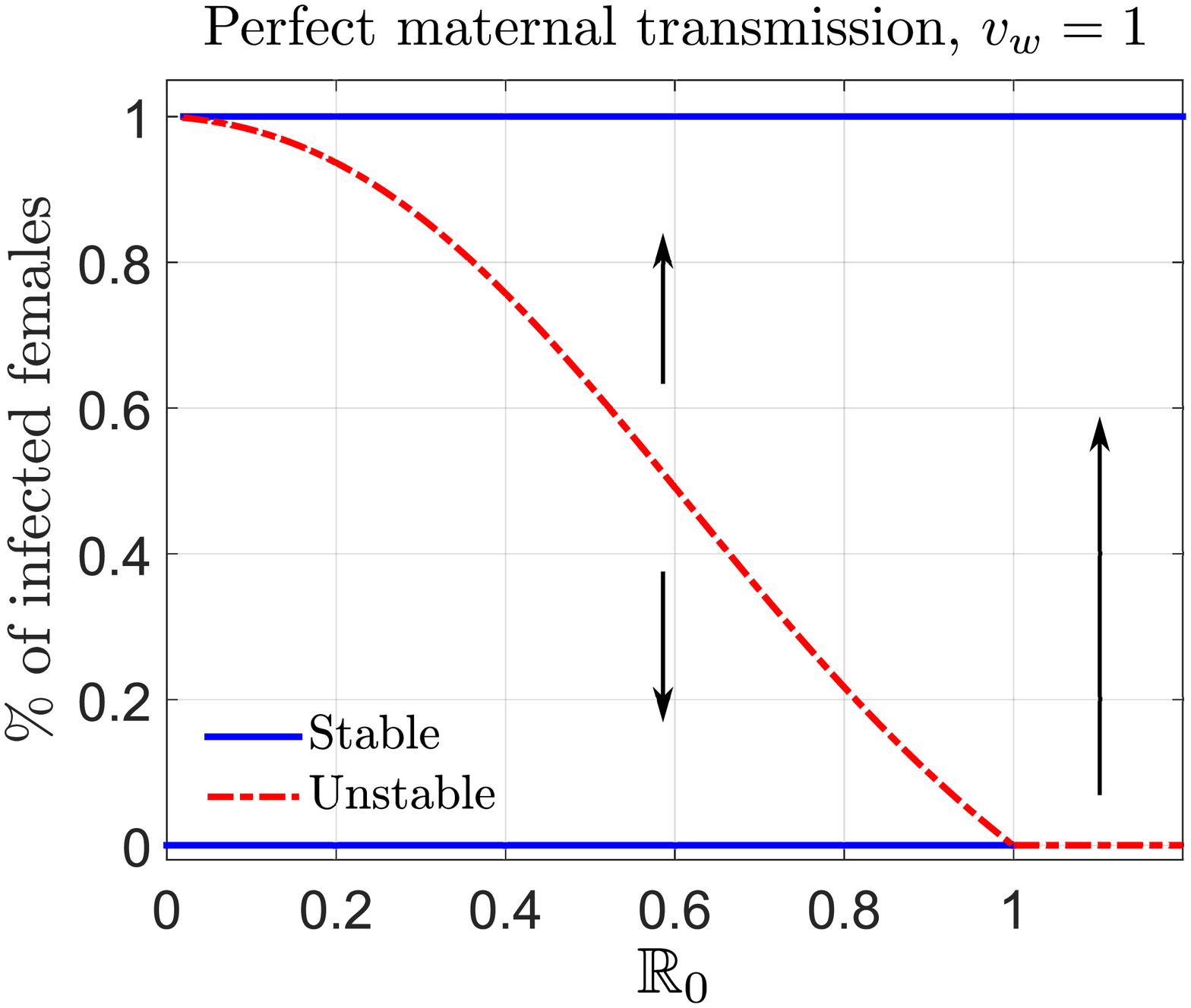}
\includegraphics[height =0.35\textwidth]{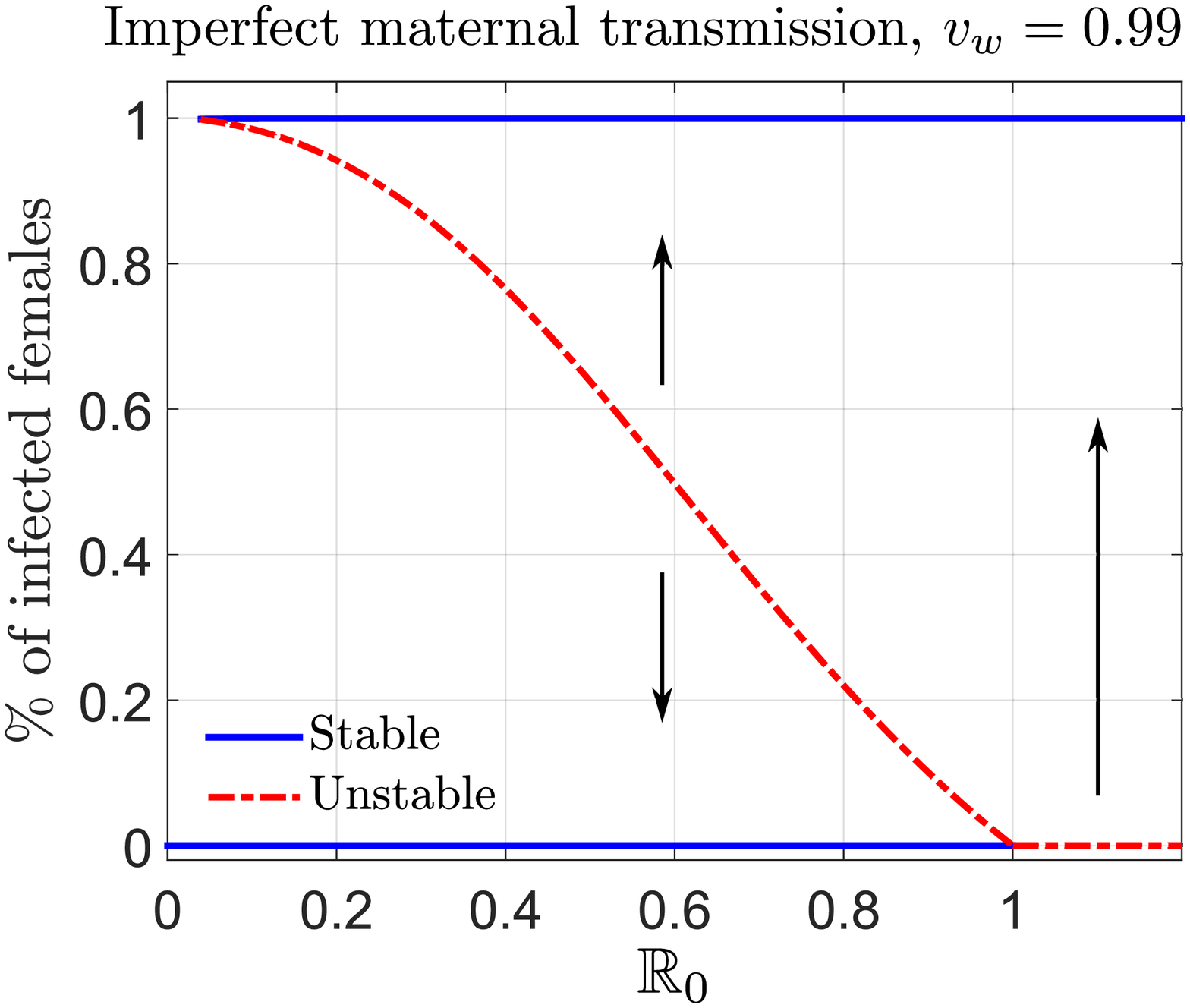}\\
\includegraphics[height =0.35\textwidth]{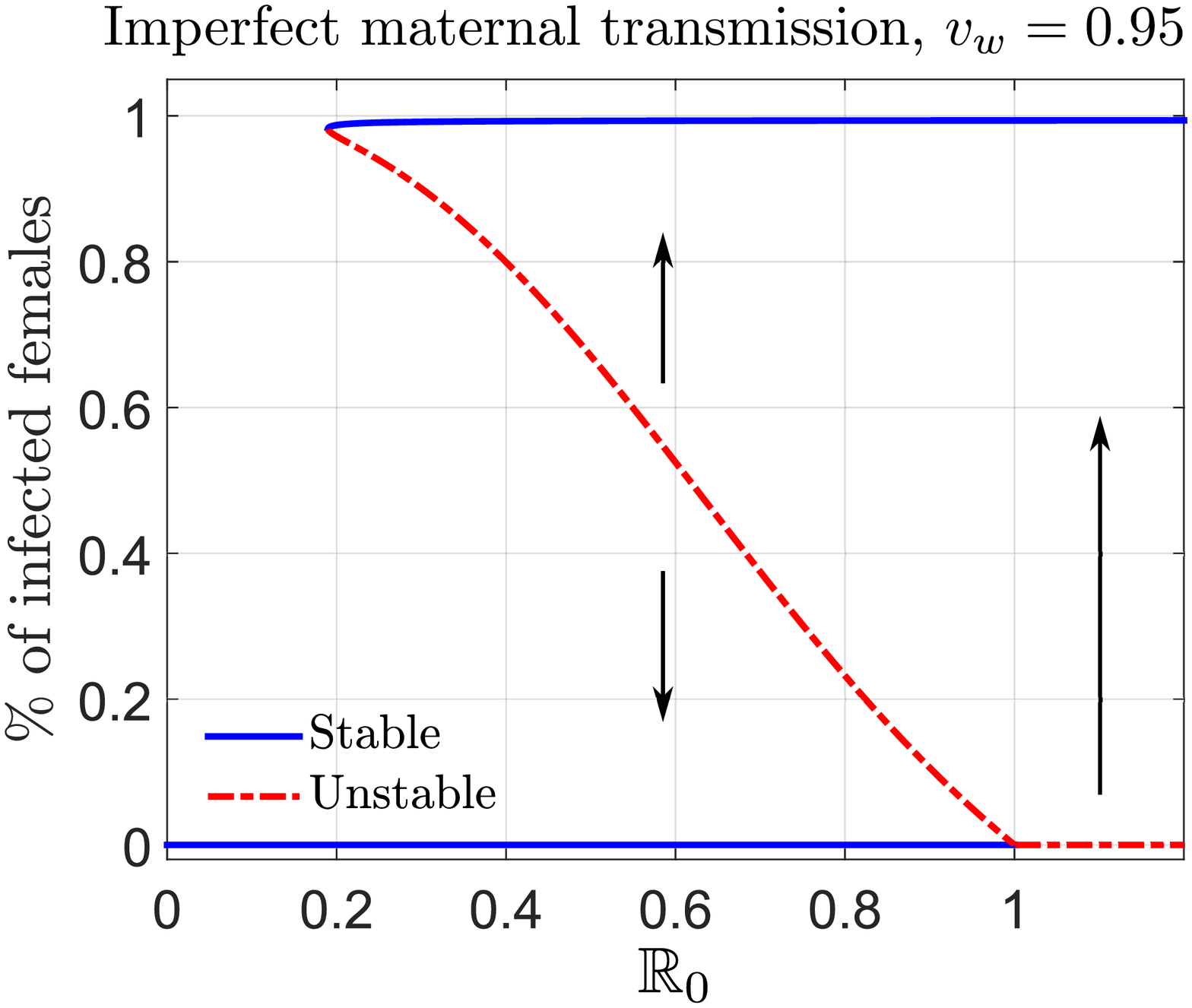}
\includegraphics[height =0.35\textwidth]{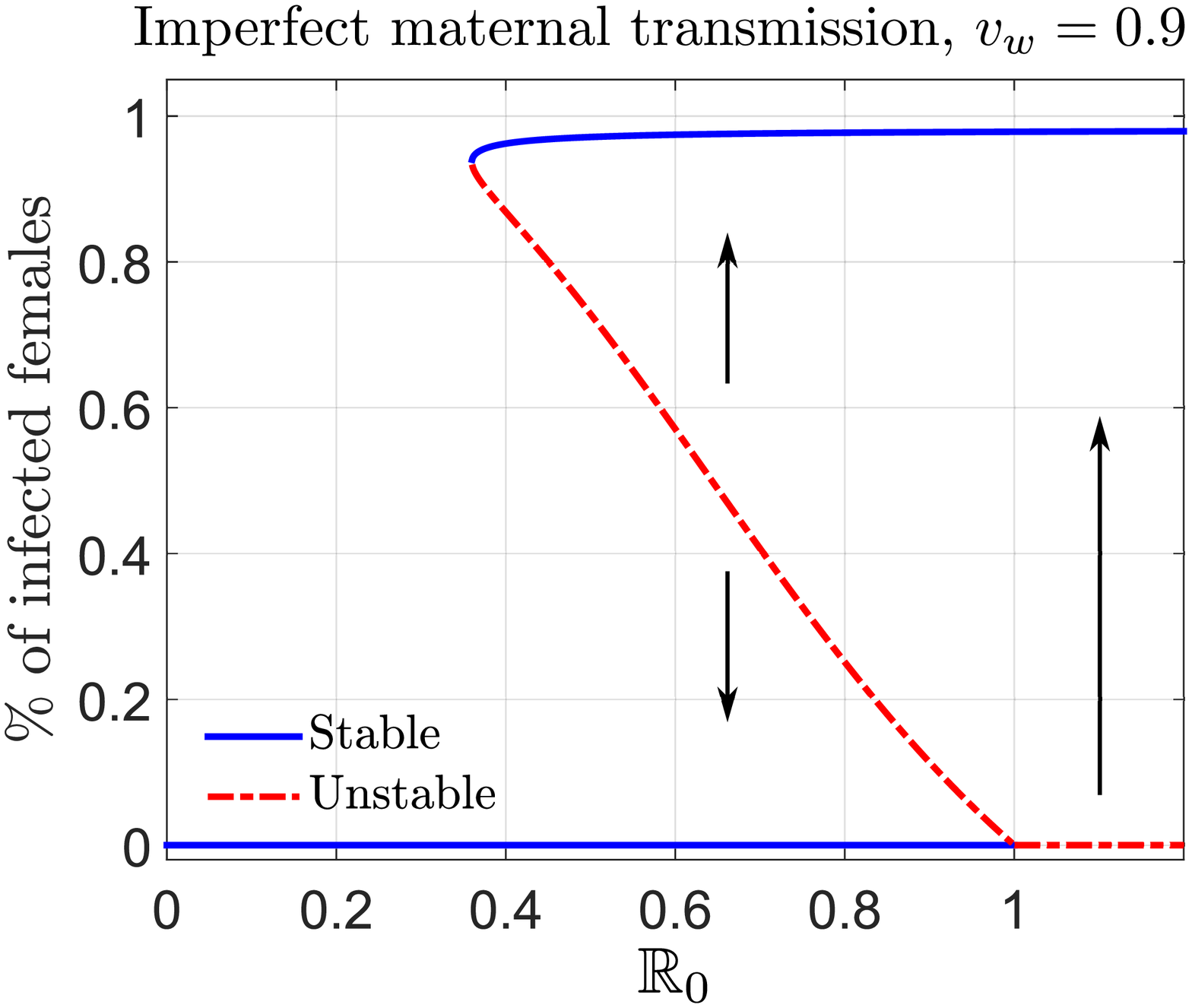}

\caption{The threshold conditions for having a stable endemic state are characterized by these bifurcation diagrams for different maternal transmission rates.  In our simulations, the threshold conditions are determined for the baseline case at $\R_0=0.721$. The red dashed curves represent the unstable EE, and the solid blue curves correspond to the stable EE.  As the maternal transmission rate decreases, the threshold condition  increases and the prevalence of infection decreases. When maternal transmission rate is high, it is possible to establish a stable endemic state for a wide range of $\R_0$ values. However, when the transmission rate is low (e.g. $v_w=0.9$), due to the global stability of the DFE, a stable endemic state is unattainable for small $\R_0$ values.\label{fig:bifur2}}
\end{figure}

\section{Sensitivity Analysis}\label{sec:sec_6}
The baseline values in \cref{tab:parameter} represent our best-guess estimates of the model parameters. It is difficult to obtain good estimates of the key fitness parameters \cite{mcgraw2013beyond}, and we investigate the model dynamics over a wide range of feasible parameters to help better understand the model response under different  assumptions. Also, the scalar model parameters are approximations of the mean of an underlying distribution.  For example, the fitness of mosquitoes (lifespan or egg laying rate) are not the same for every mosquito.  We quantify the significance of these parameters in the model predictions using local and extended sensitivity to measure the relative change in the output quantities of interests (QOIs) with respect to the input parameters of interests (POIs). 

Following the framework in \cite{chitnis2008determining}, we define the normalized relative sensitivity index of a QOI, $q(p)$, with respect to the POI, $p$, as
\begin{equation}\label{eq:SI_def}
\mathcal{S}_p^q:=\frac{p}{q}\times\frac{\partial q}{\partial p} 
\end{equation}
over the plausible range of parameter $p$. The relative sensitivity index $\mathcal{S}_p^q$ measures the percentage change in the QOI given the percentage change in an input POI, that is, if parameter $p$ changes by $x\%$, then quantity $q$ changes by $\mathcal{S}_p^q\times x \%$. The sign of $\mathcal{S}_p^q$ determines if the response is increasing or decreasing.  When evaluated at the baseline parameter values,  $p=\hat{p}$  and $\hat q = q(\hat{p})$, then  
\begin{equation*}
\mathcal{S}_{\hat p}^q:=\mathcal{S}_p^q\,\Big|_{p=\hat{p}}=\frac{\hat{p}}{\hat q}\times\frac{\partial q}{\partial p}\,\Big|_{p=\hat{p}}
\end{equation*}
is called the local relative sensitivity index of $q$ at $\hat{p}$.

The fitness cost (on lifespan and egg laying rate) and maternal transmission rate are two key factors to the potential success of \textit{Wolbachia} infection being established in a wild mosquito population \cite{walker2011wmel}. We consider POIs that  measure the loss of fitness caused by \textit{Wolbachia} infection and define 
\begin{itemize}
\item[-] fitness cost on lifespan $p_{\mu_f}:=\left(\mu_{fu}^{-1}-\mu_{fw}^{-1}\right)\Big/\left(\mu_{fu}^{-1}\right)$, the fractional reduction in female's lifespan caused by \textit{Wolbachia}  infection,
\item[-] fitness cost on egg laying rates $p_{\phi}:=\left(\phi_u-\phi_w\right)\big/\left(\phi_u\right)$, the fractional reduction in the egg laying rate caused by \textit{Wolbachia} infection, and
\item[-] $p_{v_w}:=v_w$, the maternal transmission rate of \textit{Wolbachia}-infected females to offspring. 
\end{itemize} 
Meanwhile, we choose the QOIs that capture important aspects of the epidemic: 
\begin{itemize}
\item[-] $q_{th}$, the threshold condition reflected in the percentage of infected females needed for having stable endemic \textit{Wolbachia},
\item[-] $q_{T_{90}}$, time to $90\%$ infection in the female population, which measures the speed of infection spreading,
\item[-] $q_{ep}$, the endemic prevalence reflected in the percentage of infected females at endemic steady-state.
\end{itemize}
The results of the pairwise local sensitivity indices for each QOI against POI are listed in \cref{tab:SA_indices}, and the corresponding extended sensitivity analysis plots are presented in \cref{fig:ex1_SA}. In \cref{fig:fig_ex1b}, we have assigned 70\% infection as initial condition for the simulations.

\begin{table}[tbhp]
\caption{Local relative sensitivity indices $\mathcal{S}_{\hat p}^q$: a higher fitness cost or lower maternal transmission rate makes it the less efficient to establish \textit{Wolbachia} infection, which is reflected through higher threshold, longer spreading process and smaller final infection prevalence. The maternal transmission rate has the largest impact on all three quantities: if the maternal transmission increases by 1\%, the threshold condition decreases by 4.36\%.\label{tab:SA_indices}} 
\centering
\begin{tabular}{lccc}
\toprule
 & $\hat{p}_{\mu_f}$ & $\hat{p}_{\phi}$ & $\hat{p}_{v_w}$\\
\midrule
threshold condition ($q_{th})$ & 0.342 & 0.662 & $-4.36$ \\
time to $90\%$ infection ($q_{T_{90}}$) & 0.146 & 0.210 & $-5.51$ \\
endemic prevalence ($q_{ep}$) & $-7.91\times10^{-4}$ & $-1.53\times 10^{-4}$ & 0.218\\
\bottomrule
\end{tabular}
\end{table}

\begin{figure}[tbhp]
\centering
\subfloat[$\mathcal{S}_{\hat{p}_{\mu f}}^{q_{th}}=0.342$, $\mathcal{S}_{\hat{p}_{\phi}}^{q_{th}}=0.662$ and $\mathcal{S}_{\hat{p}_{v_w}}^{q_{th}}=-4.36$. At baseline case, approximately 30\% infection is needed for having a stable endemic state, and there are linear trends between threshold condition and fitness cost/maternal transmission rate.]{\label{fig:fig_ex1a}\includegraphics[width =0.47\textwidth]{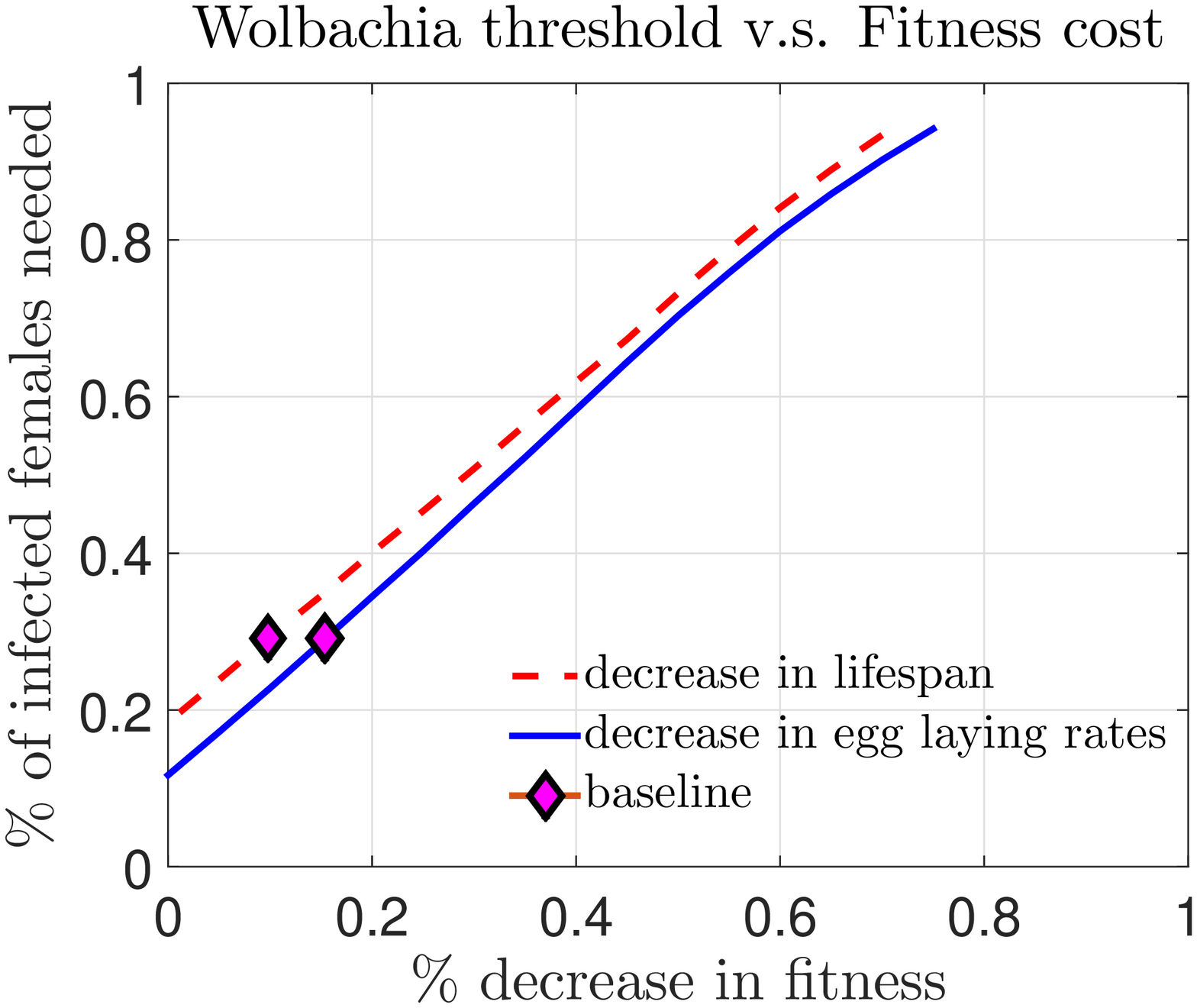}\hspace{0.03\textwidth}
\includegraphics[width =0.47\textwidth]{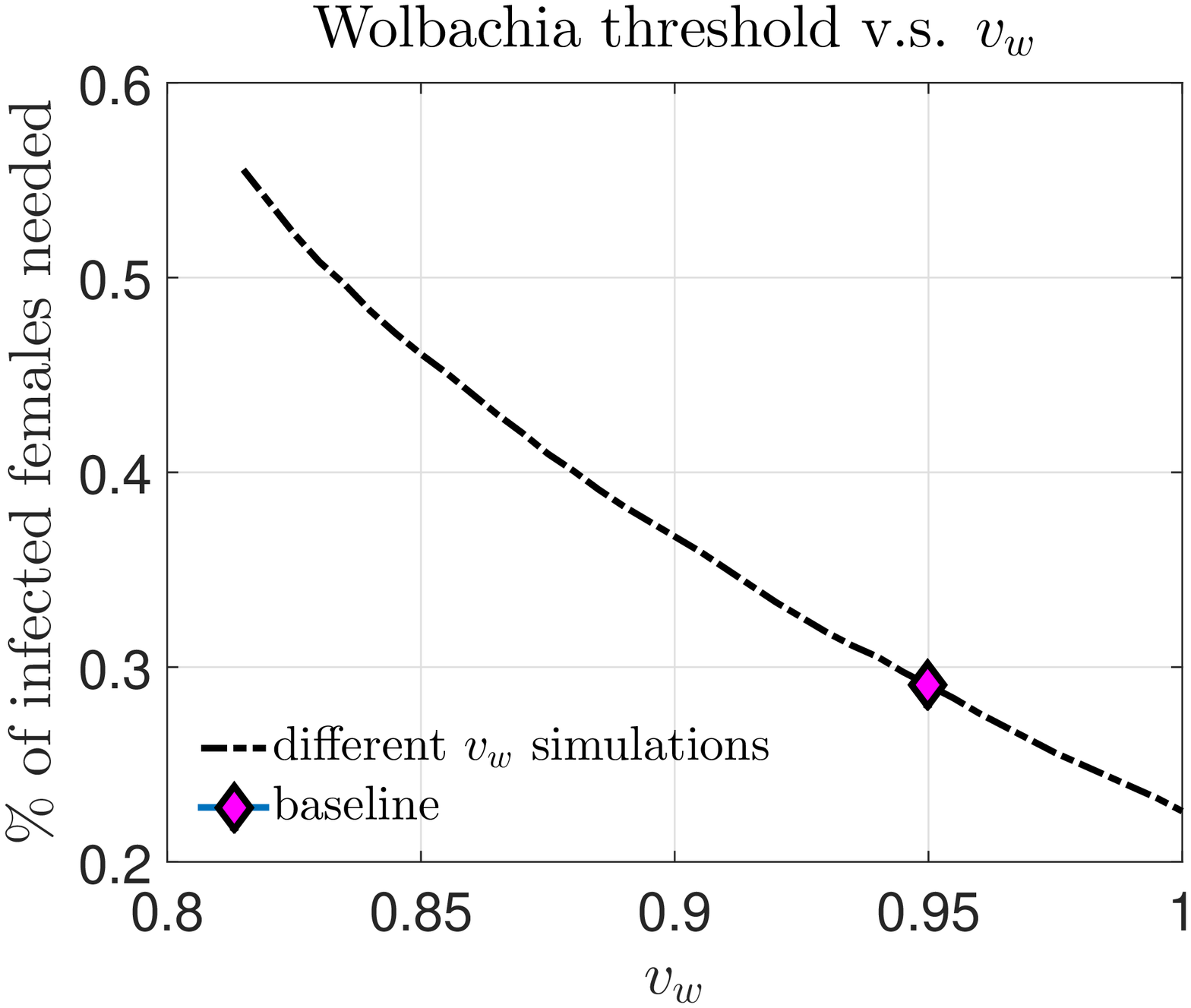}}\newline
\subfloat[$\mathcal{S}_{\hat{p}_{\mu f}}^{q_{T_{90}}}=0.146$, $\mathcal{S}_{\hat{p}_{\phi}}^{q_{T_{90}}}=0.210$ and $\mathcal{S}_{\hat{p}_{v_w}}^{q_{T_{90}}}=-5.51$. At baseline case, the fitness cost is about 13\%, and it would take about 50 days to achieve a stable endemic state. When this cost is increased to 45\%, this time increases to almost half a year. Also, when maternal transmission rate is decreased below 0.83, it is not practical to establish a stable endemic state.]{\label{fig:fig_ex1b}
\includegraphics[width =0.47\textwidth]{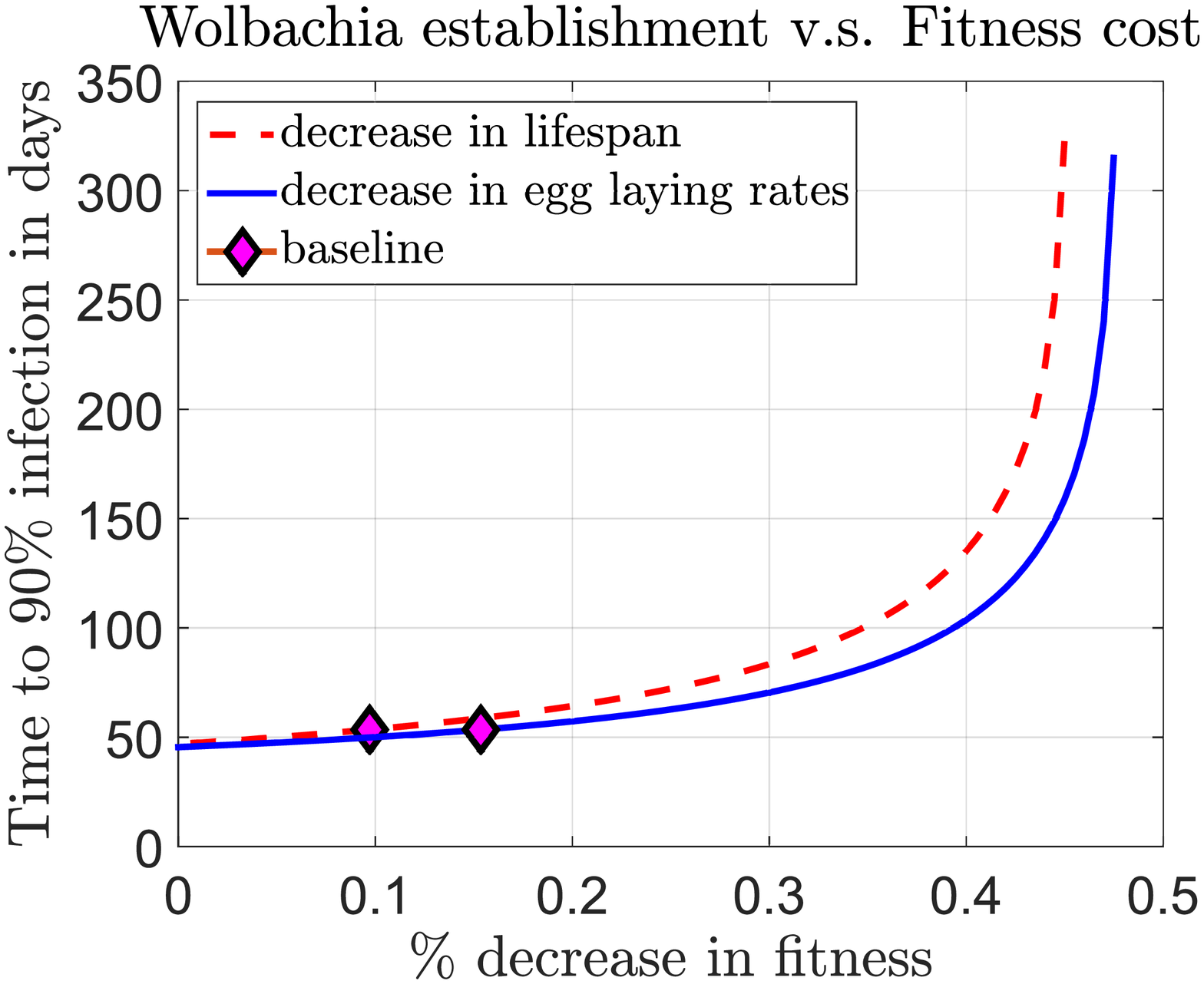}\hspace{0.03\textwidth}
\includegraphics[width =0.47\textwidth]{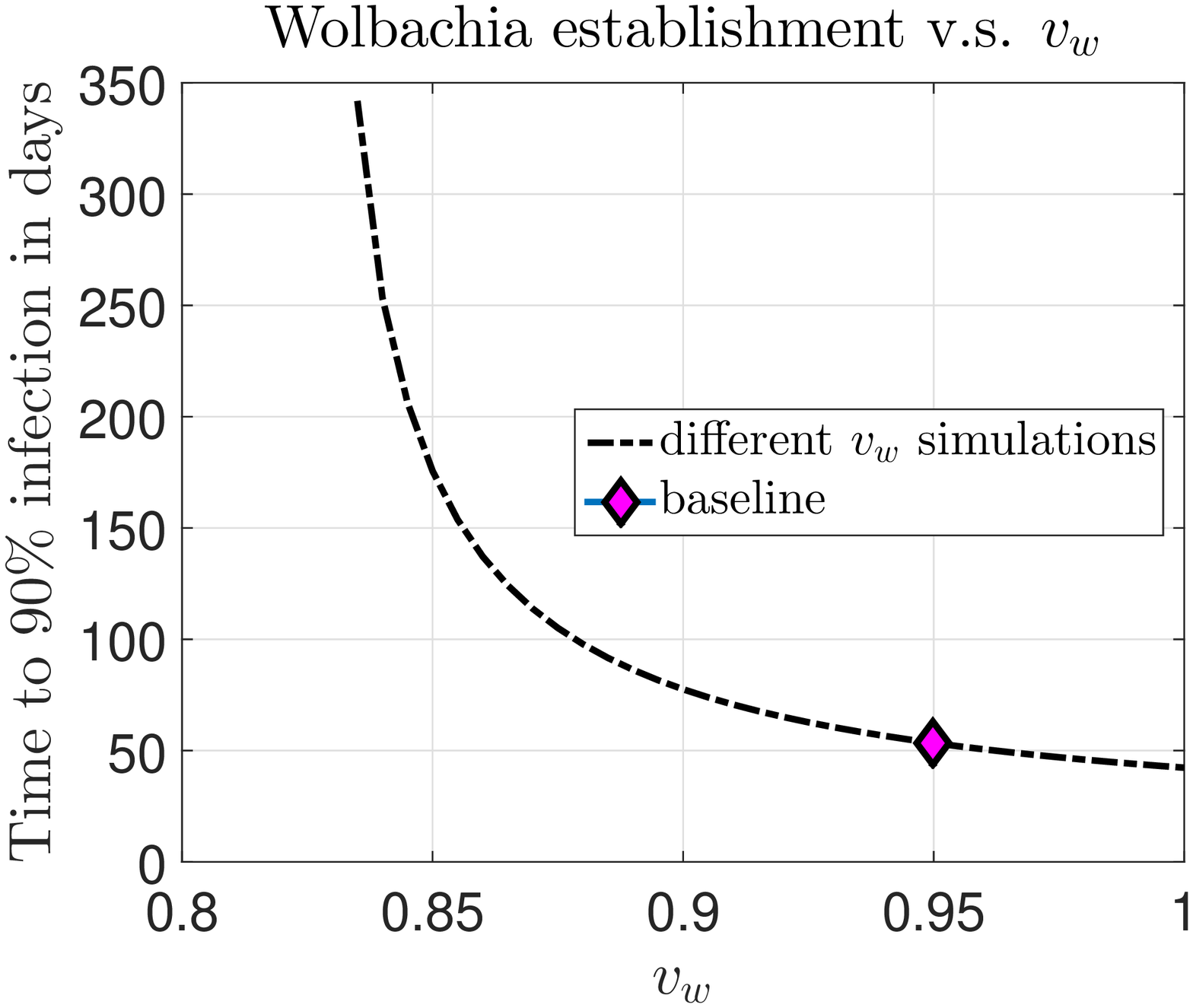}}\newline
\subfloat[$\mathcal{S}_{\hat{p}_{\mu f}}^{q_{ep}}=-7.91\times10^{-4}$, $\mathcal{S}_{\hat{p}_{\phi}}^{q_{ep}}=-1.53\times 10^{-4}$ and $\mathcal{S}_{\hat{p}_{v_w}}^{q_{ep}}=0.218$. Near the baseline case, the prevalence mainly depends on the maternal transmission rate and is not sensitive to either lifespan or egg laying rates. As the fitness cost increases, the lifespan becomes more important than egg laying rate in determining the prevalence.]{\label{fig:fig_ex1c}
\includegraphics[width =0.47\textwidth]{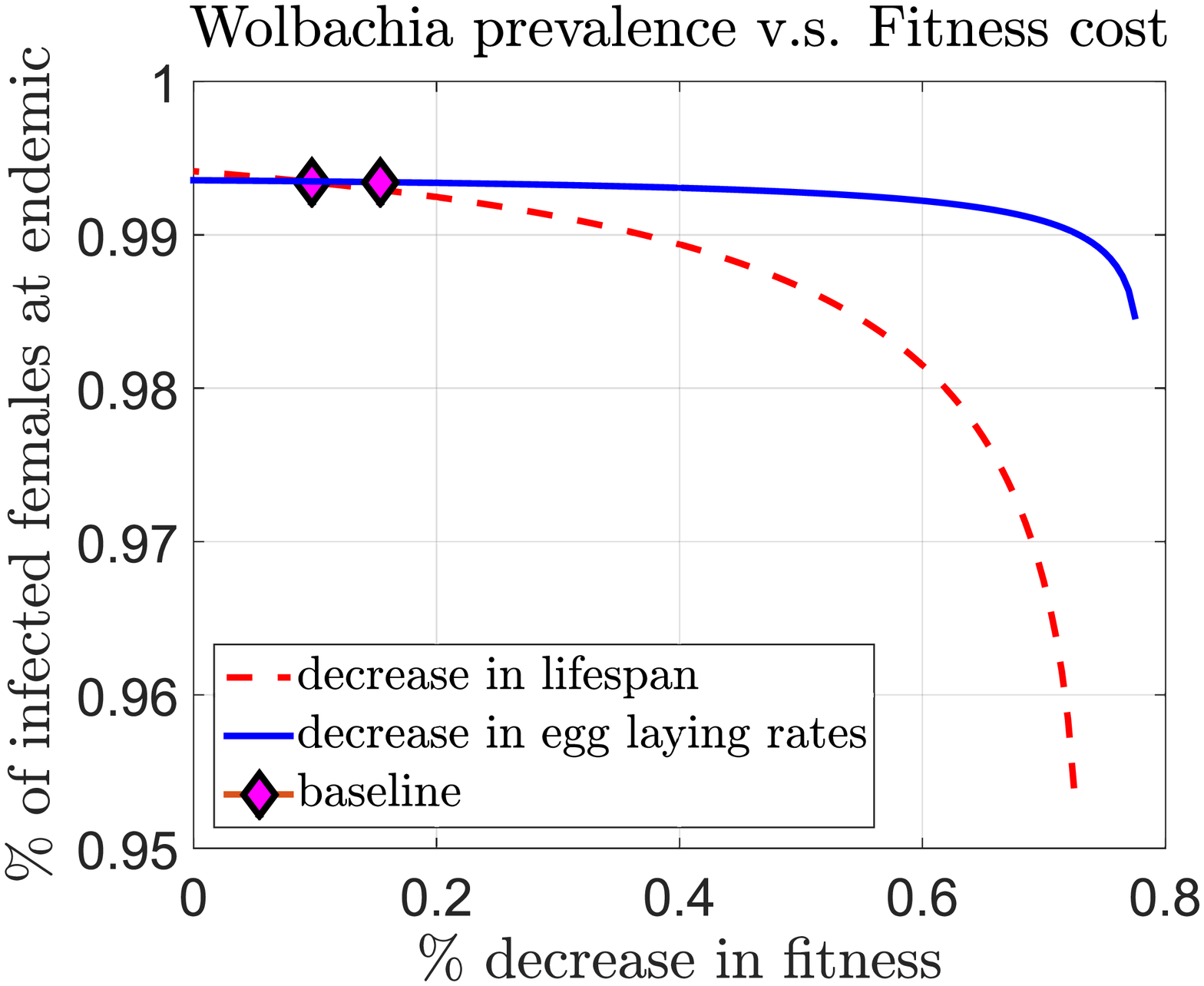}\hspace{0.03\textwidth}
\includegraphics[width =0.47\textwidth]{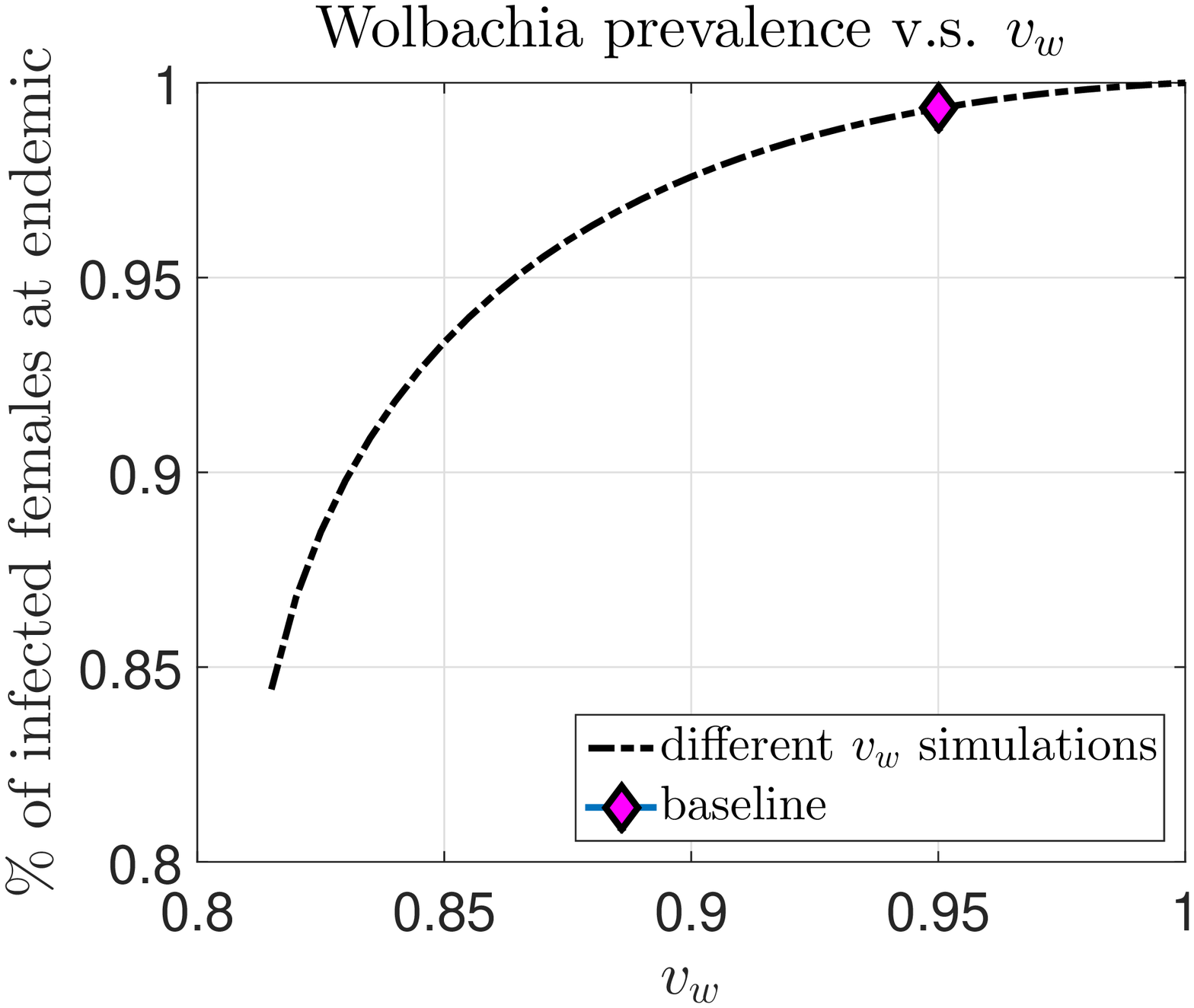}}

\caption{Sensitivity analysis: study how the fitness cost (reduction in lifespan and egg laying rate) and maternal transmission rate impact threshold condition, speed of establishing a stable endemic state and prevalence at endemic state, respectively. The extended sensitivity analysis curves show the changes in QOIs over the full range of the POIs. The diamonds indicate the baseline case used in the simulations.}\label{fig:ex1_SA}
\end{figure}

The sensitivity indices in \cref{tab:SA_indices} confirm that a higher fitness cost or lower maternal transmission rate makes it the less efficient to establish \textit{Wolbachia} infection, which is reflected through higher threshold, longer spreading process and smaller final infection prevalence. The maternal transmission rate has the largest impact on all three QOIs. 

In the extended sensitivity anlysis, we vary one POI at a time over the full parameter range and fix other parameters in the model. The baseline values (listed in \cref{tab:parameter}) gives about 13\% fitness cost for wMel strain \textit{Wolbachia} infection. \Cref{fig:fig_ex1a} shows that approximately 30\% of the female population should be infected to have a stable \textit{Wolbachia} infection, and \cref{fig:fig_ex1b} predicts that it takes about 50 days for the infection to grow from 70\% to 90\%. 

The fitness cost of  wMelPop \cite{mcmeniman2009stable} and wMelPop-CLA \cite{walker2011wmel} can be as large as 45\%. Our model predicts that for these strains, the threshold infection percentage is as high as 65\% and it takes about half a year to achieve 90\% infection in the population.  Our model does not include many real-world effects, such as the diffusion of mosquitoes in and out of the control areas, and the high fitness cost of these strains indicates that it would require multiple recurring releases of \textit{Wolbachia}-infected mosquitoes to establish a self-sustaining infected population. 

Near the baseline case, the prevalence of \textit{Wolbachia} infection at endemic steady-state is insensitive to the fitness cost and is mainly decided by the maternal transmission rate $v_w$ (\cref{fig:fig_ex1c}). As the fitness cost increases, the lifespan becomes more important than egg laying rate in determining the prevalence.

\section{Comparing Mitigation Strategies}\label{sec:sec_7}
Because the threshold condition is characterized by a minimal fraction of mosquitoes that are infected, the number of infected mosquitoes that must be released to exceed the condition can be reduced by first reducing the population of uninfected mosquitoes.   We consider using pre-release mitigation strategies based on \emph{residual spraying} to reduce both the adult and aquatic stage uninfected mosquitoes, \emph{larval control} to reduce the number of uninfected eggs, larvae, and pupae before the release, \emph{sticky gravid traps/ovitrap} to reduce the number of uninfected pregnant female mosquitoes, and \emph{acoustic attraction} to reduce the number of uninfected male mosquitoes (\cref{tab:IMM}).

\begin{table}[tbhp]
\centering
\caption{Pre-release mitigation approaches and the corresponding effectivenesses for mosquito control. \label{tab:IMM}}
\begin{tabular}{lll}
\toprule
Approach  & Target & Effectiveness \\
\midrule
Residual spraying & Adults \& larvae  & Adults $90\%$, larvae $40\%$ \cite{nguyen2009evaluation,ritchie2002dengue} \\
Larval control & Aquatic-stage & Eggs 50\%, larvae 50\% \cite{promsiri2006evaluations} \\
Sticky gravid traps/ovitrap  & Pregnant females & Pregnant females 75\% \cite{barrera2014use, ritchie2009lethal} \\
Acoustic attraction 	&  Males  & Males 80\% \cite{belton1994attractton} \\
\bottomrule
\end{tabular}
\end{table}

\emph{Indoor/perifocal residual spraying} Since \textit{Ae. aegypti} is an anthropophilic species of mosquito that breeds inside or near houses, indoor residual spraying, which is applied on cryptic resting sites inside premises, and perifocal residual spraying, which is applied to the external building and ornamental plant surfaces, have been used to target harboring or ovipositing adult mosquitoes effectively \cite{nguyen2009evaluation,ritchie2002dengue}. Adulticide can also act as a larvicide when applied to the inner surfaces of receptacles such as vehicle tires. Since larvae occupy about 11\% of the aquatic-stage population at DFE, we take 40\% reduction in larvae as 5\% reduction in overall aquatic-stage population in our numerical simulations.

\emph{Larval control} to reduce the uninfected eggs, larvae, and pupae to increase the carrying capacity for the new infected eggs. Comprehensive larval control that targets water storage using insecticide, biologicals (e.g. larvivorous copepods) and container removal has been successfully employed in field trials for small communities to reduce dengue incidence, and it has been recommended as sustained management of aquatic-stage mosquitoes \cite{achee2015critical}. Similarly, since eggs and larvae take about 89\% and 10\% of the aquatic-stage population, respectively \cite{koiller2014aedes}, we take the effectiveness as 50\% reduction in the overall aquatic-stage population.

\emph{Sticky gravid traps/ovitrap} to reduce the uninfected pregnant females before releasing the infected ones. The use of lethal ovitraps or sticky gravid traps is a customized strategy that attracts and kills female mosquitoes as they lay eggs \cite{barrera2014use}.  

\emph{Acoustic attraction} to reduce the uninfected male population before releasing the infected males. Since male mosquitoes use sound as a guide to seek females for mating, the use of sound generated by audio oscillators or recording of female mosquitoes can be used to selectively attract and kill male mosquitoes. This approach could be a useful methodology to increase the ratio of the released males to the wild males at the release time \cite{belton1994attractton}. 

In practice, some of the these pre-release mitigation methods, such as residual spraying or larvicide, may have sustained low-level efficacy over a long period of time. In our numerical simulations, we have assumed that all the mitigation stops once the release starts, and the released infected mosquitoes will not be affected. We process the pre-release mitigation step as an adjustment to the initial condition of the system.

Our simulations address three integrated mitigation strategies:
\begin{itemize}
\item[Q1:] Is it better to release infected males, nonpregnant females and/or pregnant females?
\item[Q2:] Which pre-release strategies are the most effective? 
\item[Q3:] Is it better to release all the infected mosquitoes at once, or it is better to repetitively release smaller batches?

\end{itemize} 

\subsection{Q1: What is the best mix of infected mosquitoes to release?}

Releasing only infected male mosquitoes, similar to sterile insect releases, can reduce the mosquito populations. Infected males act like adulticide: they sterilize the natural females and reduce the population size. However, this approach requires long-term repetitive releases, hence it is not a self-sustained mitigation strategy.   
To establish an sustained endemic equilibrium, we release both infected males and females to create stable endemic \textit{Wolbachia}. 
We release  $2X=1.8 F_{pu}^0$ mosquitoes, where $F_{pu}^0$ is the number of pregnant females at DFE.  We also assume that we release the same number of males ($X$ males) and females ($X$ females), since there birth rates are almost equal \cite{tun2000effects}.
We compare the approaches:

\emph{Pregnant Female Release (PFR) Approach} releasing infected males, $M_w$, and pregnant females, $F_{pw}$, from the same container.  (When males and females are stored at the same place, nearly all the females become pregnant by the time of the release.)

\emph{Nonpregnant Female Release  (NPFR) Approach} releasing infected males, $M_w$, and nonpregnant females, $F_w$, from different containers.  In this approach, the females are separated from the males shortly after birth. 

\begin{figure}[t]
\centering
\includegraphics[width =0.48\textwidth]{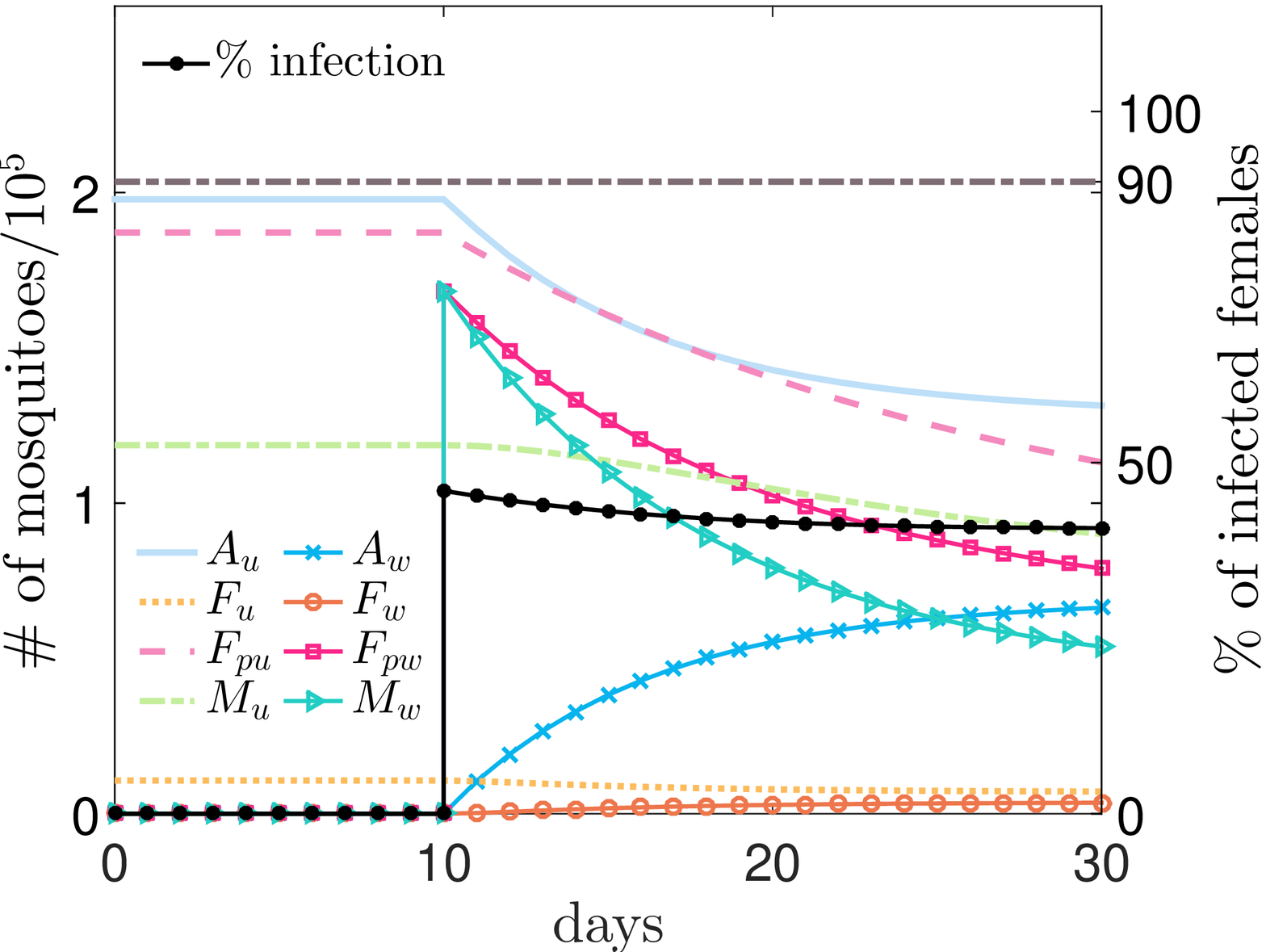}\hspace{0.03\textwidth}
\includegraphics[width =0.48\textwidth]{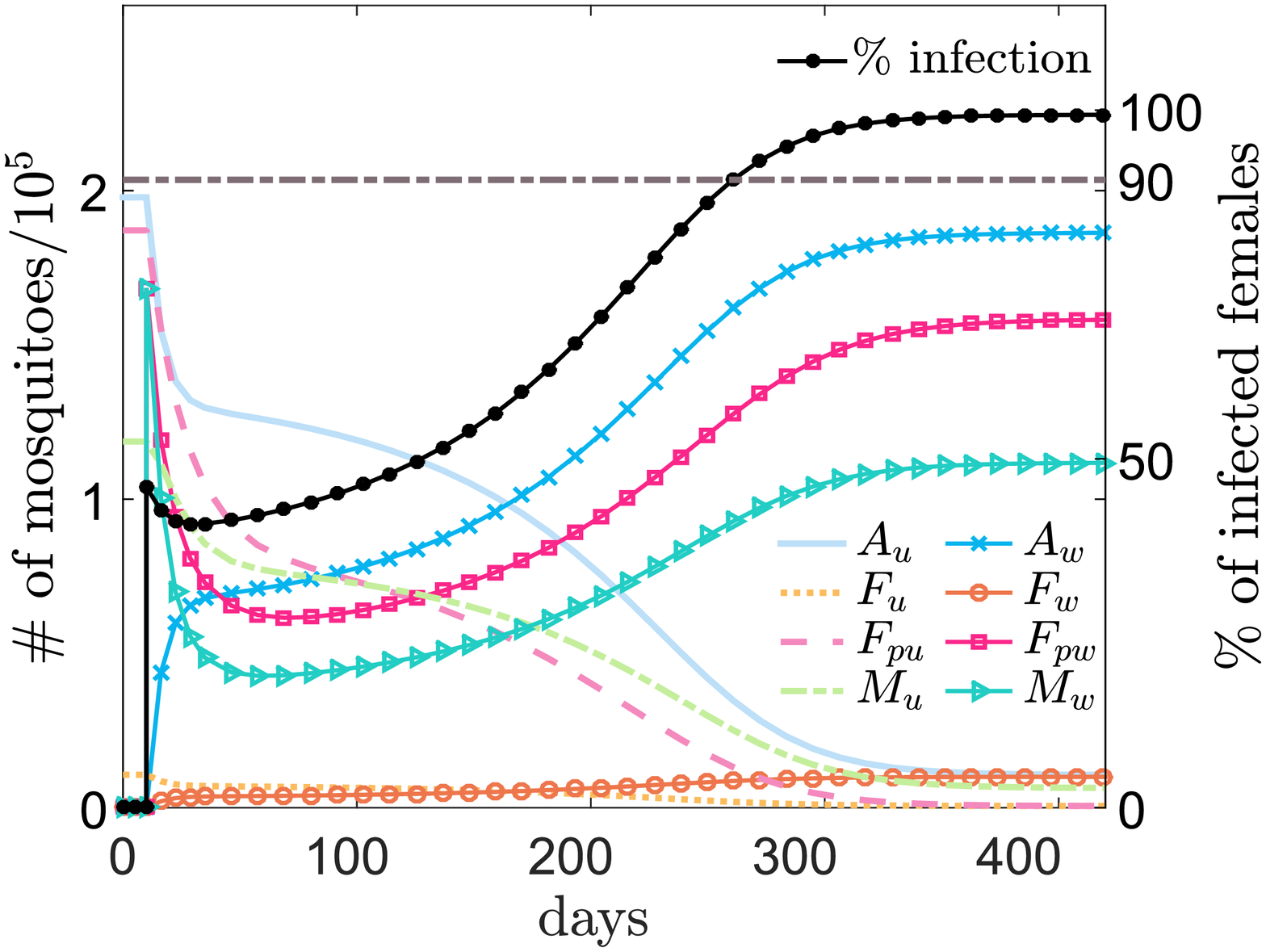}
\caption{PFR Approach Release infected pregnant females, $F_{pw}$, and males, $M_w$, at day 10 of the simulation. Left: In the first 20 days after the release, the infected adult populations all decrease, although the fraction of infected females (black circles) remains relatively constant. Right: A stable endemic state (90\% infection) is established just after 260 days.}\label{fig:ex2_release}
\end{figure}

\Cref{fig:ex2_release} shows the results when there is no pre-release mitigation, to reduce the DFE wild population, before 
releasing infected males and females (PFR approach) at day 10 of the simulation.  After a short initial transition stage, the infected population gradually dominates, and the a 90\% stable endemic infection  is achieved shortly after 260 days.  In similar simulation for the NPFR Approach, we see that there is a delay ($\sim$\,20 days, see row 1 in \cref{tab:ex2_90}) in the establishment of the epidemic when releasing nonpregnant females. This delay results from the time that it takes for a nonpregnant female to mate with a male and enter the pregnant stage.   

\begin{table}[tb]
\centering
\caption{Time (days) to 90\% infection in females. Release $X=0.9F^0_{pu}$ infected males and $X$ infected nonpregnant/pregnant females under different pre-release mitigation methods. Releasing pregnant females can establish an endemic state sooner than releasing nonpregnant females.  All the pre-release mitigation methods speed up the spreading of \textit{Wolbachia} infection.  The pre-release mitigation methods that target pregnant females (residual spraying, sticky trap) are more effective than ones that target only males or the aquatic-stage.   Residual spraying (in bold) is the most effective pre-release mitigation.\label{tab:ex2_90}}
\begin{tabular}{lrr}
\toprule
 Release Approach & $PFR (M_w+F_{pw})$ & NPFR ($M_w+F_{w})$ \\
\midrule
No pre-release mitigation &  261& 279 \\
\textbf{Residual spraying} & $\mathbf{52}$ & $\mathbf{55}$ \\
Larval control    & 203& 268 \\
Sticky trap      & 105 & 108 \\
Acoustic attraction& 215 & 227 \\ 
\bottomrule
\end{tabular}
\end{table}

\subsection{Q2: Which pre-release strategies are the most effective?}
Applying different pre-release mitigation methods (row 2-5 in \cref{tab:ex2_90}), we see that all the pre-release mitigation methods reduce the time to establish a \textit{Wolbachia}-infected population.  Ranking the pre-release strategies, in order of decreasing effectiveness, we observe: residual spraying $>$ sticky trap $>$ larval control $\approx$ acoustic attraction.
We also observe that it is less effective to release nonpregnant infected females (NPFR) rather than pregnant infected females (PFR). In practice, it is also more cost-effective to raise mosquitoes and store the new offspring (males and females) in the same container, we will only consider PFR Approach in the following simulations.

Our simulations (\cref{tab:ex2_90}) indicate that the pre-release mitigation methods targeting pregnant females (residual spraying, sticky trap) are more effective than ones that target only males or the aquatic-stage. In the maternal transmission cycle (\cref{fig:cycle}) at the DFE, most females are pregnant, $F^0_{pu}>>F^0_u$. By removing pregnant female mosquitoes before releasing the infected ones, it increases the fraction of new eggs that are infected, speeding up the spread of infection. Moreover, at the DFE there are many more uninfected males than nonpregnant females available (\cref{fig:ex2_release}), hence removing additional males does not greatly affect the overall transmission dynamics. 

By removing the aquatic-stage mosquitoes alone without killing any natural pregnant female is not an effective approach. When we remove the aquatic-stage mosquitoes, our aquatic populations are below the defined carrying capacity.  To exploit this gap we need to supply infected eggs as efficiently as we can. Since we don't directly release infected eggs into the environment and the majority of the pregnant females in the population are still uninfected, the gap made by the larval control is mostly filled with new uninfected eggs.

\subsection{Q3: Is One Big Release better than Split Repetitive Releases?}

Mitigation strategy Q3 considers if it is better to release all the infected mosquito at once or have repetitive smaller releases.
One advantage of the later case is that it may reduce the impact on local environment and neighborhood.

We simulate the practice of split releases by dividing one big release ($X$ males and $X$ pregnant females) into five smaller releases ($0.2X$ males and $0.2X$ pregnant females each time) with different releasing gaps (1 3, 7, 10 and 15 days between releases). We record the time of achieving 90\% infection in females for each possible combination of scenarios with different pre-release mitigation in \cref{tab:ex3_90_split}.

\begin{table}[tbhp]
\centering
\caption{The time (days) to achieve 90\% infection in females when releasing the same number of $X$ infected males and $X$ pregnant females in one big release or split repetitive releases with a time gap between the repetitive releases.   The pre-release mitigations are, from best to least effective:   Residual spraying $>$  Sticky trap $>$ Larval control $\approx$ Acoustic attraction. \label{tab:ex3_90_split}}
\begin{tabular}{lrrrrrrr}
\toprule
Time between releases:& Single & 1 day  & 3 days & 7 days & 10 days & 15 days \\
 & Release & gap & gap & gap & gap & gap \\
\midrule
No pre-release mitigation & 261 & 248 & 229 & 221 & 223 & 234 \\
\textbf{Residual spraying}& $\mathbf{52}$ & $\mathbf{54}$  & $\mathbf{66}$  & $\mathbf{80}$  & $\mathbf{90}$  & $\mathbf{103}$ \\
Larval control            & 203 & 229 & 214 & 208 & 210 & 221 \\
Sticky trap               & 105 & 108 & 118 & 131 & 142 & 159 \\
Acoustic attraction       & 215 & 207 & 199 & 200 & 207 & 222 \\ 
\bottomrule
\end{tabular}
\end{table}

Comparing the data in \cref{tab:ex3_90_split} horizontally in each row, we see that the optimal releasing interval may be different for different pre-release mitigation. For the case without pre-release mitigation, doing split releases is a better strategy than a big release. This is due to the constraint of carrying capacity in the aquatic-stage population. When releasing pregnant females without any pre-release mitigation (at DFE), the limited carrying capacity for producing infected offspring, limits the vertical transmission of the infection.  If the released infected males could sterilize a significant number of the uninfected nonpregnant females, then this can free up some of the aquatic carrying capacity and create space for the infected offspring to survive.  However, there are very few nonpregnant females,  and the male lifespan is short, one big release may result in a poor use of the males. Meanwhile, repetitive releases are able to solve those issues by reducing the redundancy in the males (smaller amount of infected males are released each time) and maintaining the availability of infected male over a longer time span (several releases with suitable time span in between).  Among these strategies, leaving a week between the releases is an optimal choice.

Similar explanations can be applied to understand the case of acoustic attraction. Even when 80\% of males are killed, the number of males is still much more than the nonpregnant females out there, and releasing all the infected males at once again leads to the waste of power of infected males. The optimal time period between releases is also around one week.

For pre-release mitigation using residual spraying or sticky trap, one big release is better than repetitive releases of smaller size. This is because a big change in the number of uninfected pregnant females may cause a gap in the aquatic-stage population, and it is critical to fill in the gap with infected offspring as soon as possible to avoid the bouncing back of natural population. Therefore, releasing all the infected mosquitoes at once is more efficient.

A similar case is for the pre-release mitigation using larval control, where it is better to release all the infected mosquitoes at the same time. However, split releases with just one day gap in between cause a big delay ($\sim$\,25 days) in the dynamics. This stresses the importance of bouncing back effect in the aquatic-stage population. Unlike the mitigation using residual spraying or sticky traps, larval control directly remove the aquatic-stage and create a gap. If only $0.2X$ infected pregnant females are introduced at the first release and there are $0.9X$ uninfected pregnant females in the field, then more than 80\% of the new offspring are uninfected, and they fill the gap immediately. As a result, the system arrives at a similar situation as it would be without any pre-release mitigation, and the corresponding trend in using a different releasing interval is similar to the first row.

In summary, to establish a \textit{Wolbachia}-infected population: 
\begin{itemize}
\item[A1:] Releasing infected males and pregnant females is more effective than releasing infected males and nonpregnant females;  
\item[A2:] Residual spraying, including the breeding sites, is a more effective pre-release strategy than the other pre-release strategies considered; and
\item[A3:] It better to release all the infected mosquitoes at once than to repetitively release smaller batches.
\end{itemize}

\section{Discussions and Conclusions}
We propose and analyze a two-sex multi-stage compartmental ODE model to describe the \textit{Wolbachia} transmission in a wild mosquito population. Our model captures the complex transmission cycle by including both male and female mosquitoes, nonpregnant and pregnant female mosquitoes, and aquatic-life stage mosquitoes limited by a prescribed carrying capacity. In particular, for any pregnant female mosquito, it can be in one of the three states: pregnant uninfected, pregnant sterile or pregnant infected. 

We study the existence and stability of the equilibrium points associated with the proposed model for both perfect and imperfect maternal transmissions. For both cases, there is one DFE, which is stable for $0<\R_0<1$. When the maternal transmission is perfect, there is a stable CIE and a unstable EE. Meanwhile, when the maternal transmission is imperfect, there is no CIE that could be achieved but two EE: a high-infection stable EE and a low-infection unstable EE. This stability analysis leads to a backward bifurcation diagram (see \cref{fig:bifur} and \cref{fig:bifur2}) with the unstable equilibrium points being the threshold condition for endemic \textit{Wolbachia}: below the threshold, the infection is wiped out by the wild uninfected mosquitoes and system goes back to DFE; above the threshold, the infection spreads out and eventually all/most mosquitoes are infected with \textit{Wolbachia}.

This threshold condition is described by three dimensionless numbers associated with the proposed model: basic reproductive number $\R_0$, the next generation number for the infected population $\G_{0w}$ and the next generation number for the uninfected population $\G_{0u}$, and we observe that $\R_0=\G_{0w}/\G_{0u}$.   \textit{Wolbachia}-infected populations of \textit{Ae. aegypti} are not found in nature, implying that  $\R_0<1$.  The basic reproductive number quantifies the local stability of the DFE when small numbers of infected mosquitoes are introduced.  There is a threshold condition where if a large number of infected mosquitoes are introduced, the population is attracted to a stable endemic \textit{Wolbachia}-infected equilibrium state.  We used backward bifurcation analysis to characterize this threshold condition to demonstrate that using \textit{Wolbachia} could be a  practical approach to suppress the spread of infectious diseases such as dengue fever, chikungunya and Zika.  

Our sensitivity analysis identified that the fitness cost (lifespan and egg laying rates) and the maternal transmission rate are two key factors to the potential success of creating endemic \textit{Wolbachia} in a wild mosquito population. A higher fitness cost or lower maternal transmission rate makes it the less efficient to establish the infection, which is reflected through higher threshold, longer spreading process and smaller final infection prevalence. The maternal transmission rate has the largest impact for all three aspects.

We found that releasing infected pregnant females was more effective than releasing infected nonpregnant females.  It is also more cost-efficient to raise mosquitoes and store infected males and females in the same container, resulting in releasing infected pregnant females. 

Our simulations indicate that the pre-release mitigations that target pregnant females, such as residual spraying and sticky gravid traps, are more helpful than ones that target only males or the aquatic-stage. Removing uninfected pregnant females greatly slows down the reproduction of the uninfected offspring, and the gap can be filled up mostly with infected population. Finally, we compare the efficiency between releasing all the infected mosquitoes at once and split releases of smaller sizes. Since repetitive releases are often done in the field, it is interesting to learn how this repetition and releasing interval will affect the disease transmission. The results show that, under different pre-release mitigations, different releasing strategy is desired, and it depends on what group of natural population that the pre-release mitigation targets.

This model offers important insights into using \textit{Wolbachia} as a potential mitigation strategy.  Before using these insights to guide policy, the uncertainty of the predictions must be quantified with respect to the model assumptions.  
For example, we have assumed that all the parameters are constant, thus there is no seasonal variation. In reality, parameters, such as development rate of the aquatic-stage, death rates, and the carrying capacity of the local environment, vary with the temperature and humidity. By including seasonality, the model would give more practical guide when the releasing process spans more than one season. Our model has assumed that all the mosquitoes are homogeneously mixed together, and the results can be considered as the average over a large number of random individual behavior. However, when it comes to the real field releases of \textit{Wolbachia}-infected mosquitoes, the infected population is only released at several distant spots, from where the infection diffuses out in a radial symmetric manner. We are currently extending this model to include both spatial heterogeneity and temporal variations using a partial differential equations that incorporate seasonal variations and the diffusion of mosquitoes.

\appendix
\begin{appendices}
\renewcommand{\theequation}{A.\arabic{equation}}
\setcounter{equation}{0}
\renewcommand{\thefigure}{A.\arabic{figure}}
\setcounter{figure}{0}
\renewcommand{\thesection}{A}
\setcounter{section}{0}

\subsection{Proof of \cref{thm:th_CIE}}\label{sec:A2}
At the CIE, the Jacobian is partitioned as
\begin{equation*}
J_{CIE}=\left(
\begin{array}{c;{2pt/2pt}c}
A_{CIE} & 0\\ \hdashline[2pt/2pt]
C_{CIE} & D_{CIE}
\end{array}
\right).
\end{equation*}

We first prove the stability of matrix
\begin{equation*}
A_{CIE} = \left(
\begin{array}{cccc}
-(\mu_a+\psi) & 0 & \frac{\phi_u}{\G_{0w}} & 0\\
b_f\psi & -\sigma-\mu_{fu} & 0 & 0\\
0 & 0 & -\mu_{fu} & 0\\
b_m\psi & 0 & 0 & -\mu_{mu}
\end{array}
\right).
\end{equation*}
The diagonal position $(4,4)$ of matrix $A_{CIE}$, $-\mu_{mu}<0$, is a negative eigenvalue, and therefore we need only consider the $3\times 3$ leading principal submatrix of $A_{CIE}$, which is partitioned as 
\begin{equation*}
A_{s2} =
\left(
\begin{array}{cc;{2pt/2pt}c}
-(\mu_a+\psi) & 0 & \frac{\phi_u}{\G_{0w}} \\
b_f\psi & -\sigma-\mu_{fu} & 0 \\ \hdashline[2pt/2pt]
0 & 0 & -\mu_{fu} 
\end{array}
\right)
 =\left(
\begin{array}{c;{2pt/2pt}c}
A_3 & B_3\\\hdashline[2pt/2pt]
C_3 & D_3
\end{array}
\right).
\end{equation*}
$A_{s2}$ is a Metzler matrix  \cite{kamgang2008computation} and is Metzler stable if and only if both $A_3$ and $D_3-C_3A_3^{-1}B_3=-\mu_{fu}<0$
are Metzler stable. The Metzler stability of $A_3$ is immediate, since it is a lower triangular matrix with negative diagonal entries and nonnegative off-diagonal entries. 

Now we consider the stability of
\begin{equation*}
D_{CIE} = \left(
\begin{array}{cccc}
-\G_{0w}(\mu_a+\psi) & 0 & \frac{\phi_w}{\G_{0w}} & 0\\
b_f\psi & -\sigma-\mu_{fw} & 0 & 0\\
0 & \sigma & -\mu_{fw} & 0\\
b_m\psi & 0 & 0 & -\mu_{mw}
\end{array}
\right).
\end{equation*}
The $(4,4)$ entry of $D_{CIE}$, $-\mu_{mw}<0$, is an eigenvalue and  therefore we need only consider the $3\times 3$ leading principal submatrix
\begin{equation*}
D_{s2} = \left(
\begin{array}{cc;{2pt/2pt}c}
-\G_{0w}(\mu_a+\psi) & 0 & \frac{\phi_w}{\G_{0w}} \\
b_f\psi & -\sigma-\mu_{fw} & 0 \\ \hdashline[2pt/2pt]
0 & \sigma & -\mu_{fw} 
\end{array}
\right) =\left(
\begin{array}{c;{2pt/2pt}c}
A_4 & B_4\\\hdashline[2pt/2pt]
C_4 & D_4
\end{array}
\right),
\end{equation*}
which is a Metzler matrix. The Metzler stability of matrices $A_4$ is obvious, and 
\begin{equation}
D_4-C_4A_4^{-1}B_4=-\mu_{fw}\left(1-\frac{1}{\G_{0w}}\right)<0 \quad\mbox{provided}\quad \G_{0w}>1.
\label{eq:pf2b}
\end{equation} 

Thus, under the condition in \cref{eq:pf2b}, the Metzler stability of Jacobian $J_{CIE}$ is guaranteed, that is, all the eigenvalues are negative and CIE is stable if $\G_{0w}>1$. The proof of theorem \cref{thm:th_CIE} is completed.

\end{appendices}

\section*{Acknowledgments} 
We are thankful to our colleagues Panpim Thongsripong and Dawn Wesson from School of Public Health and Tropical Medicine at Tulane University who provided expertise on \textit{Wolbachia} infection in mosquitoes. We thank Jeremy Dewar for his help with the local and global sensitivity analysis. This research was partially supported by the NSF/MPS/DMS-NIH/NIGMS award NSF-1563531 and the NIH-NIGMS Models of Infectious Disease Agent Study (MIDAS) award U01GM097661. The content is solely the responsibility of the authors and does not necessarily represent the official views of the National Science Foundation or the  National Institutes of Health.
\bibliographystyle{siamplain}
\bibliography{WolbachiaPregnancy}

\end{document}